\documentclass[sigconf]{acmart}

\usepackage{graphicx}
\usepackage{balance}
\usepackage{subfigure}

\usepackage{amsmath}
\usepackage{mathtools}
\usepackage{amsthm}
\usepackage{multirow}
\usepackage{colortbl}
\usepackage{algorithm,algorithmic}
\usepackage[capitalize,noabbrev]{cleveref}

\usepackage{mathtools}
\usepackage{tikz}
%
%


\setlength {\marginparwidth }{2cm}
\usepackage{pgfplots}
\pgfplotsset{compat=1.18}
\usepackage{xkcdcolors}


\definecolor{BrickRed}{rgb}{0.6,0,0}
\definecolor{RoyalBlue}{rgb}{0,0,0.8}
\definecolor{Tdgreen}{rgb}{0,0.4,0.7}
\definecolor{pinegreen}{rgb}{0.0, 0.47, 0.44}
\definecolor{cornellred}{rgb}{0.7, 0.11, 0.11}
\definecolor{cadmiumgreen}{rgb}{0.0, 0.42, 0.24}
\definecolor{spirodiscoball}{rgb}{0.06, 0.75, 0.99}


\usepackage{pifont}

\usepackage{tikz}
\usetikzlibrary{decorations.pathreplacing,calc}

%
%





\DeclareMathOperator*{\argmin}{arg\,min}



\newcommand{\Ac}{\mathcal{A}}
\newcommand{\Nc}{\mathcal{N}}


\newcommand{\x}{\mathbf{x}}

\newcommand{\y}{\mathbf{y}}

\newcommand{\z}{\mathbf{z}}

\newcommand{\Ib}{\mathbf{I}}

\newcommand{\thetab}{\mathbf{\theta}}
\newcommand{\epsilonb}{\boldsymbol{\epsilon}}
\newcommand{\mub}{\boldsymbol{\mu}}



\newcommand{\Rd}{{\mathbb R}}

\usepackage{tikz, pgfplots}
\usetikzlibrary{positioning}
\usepackage{capt-of}
\usepackage{diagbox}

\definecolor{C0}{rgb}{0.121569, 0.466667, 0.705882}
\definecolor{C1}{rgb}{1.000000, 0.498039, 0.054902}
\definecolor{C2}{rgb}{0.172549, 0.627451, 0.172549}
\definecolor{C3}{rgb}{0.839216, 0.152941, 0.156863}
\definecolor{C4}{rgb}{0.580392, 0.403922, 0.741176}
\definecolor{C5}{rgb}{0.549020, 0.337255, 0.294118}
\definecolor{C6}{rgb}{0.890196, 0.466667, 0.760784}
\definecolor{C7}{rgb}{0.498039, 0.498039, 0.498039}
\definecolor{C8}{rgb}{0.737255, 0.741176, 0.133333}
\definecolor{C9}{rgb}{0.090196, 0.745098, 0.811765}

\definecolor{trolleygrey}{rgb}{0.5, 0.5, 0.5}
\definecolor{darkpastelgreen}{rgb}{0.01, 0.75, 0.24}
\definecolor{darkpink}{rgb}{0.91, 0.33, 0.5}
\definecolor{alizarin}{rgb}{0.82, 0.1, 0.26}
\definecolor{americanrose}{rgb}{1.0, 0.01, 0.24}
\definecolor{lightblue}{rgb}{0.88, 0.88, 0.88}

\AtBeginDocument{%
  }

\copyrightyear{2024}
\acmYear{2024}
\setcopyright{acmlicensed}\acmConference[MM '24]{Proceedings of the 32nd ACM International Conference on Multimedia}{October 28-November 1, 2024}{Melbourne, VIC, Australia}
\acmBooktitle{Proceedings of the 32nd ACM International Conference on Multimedia (MM '24), October 28-November 1, 2024, Melbourne, VIC, Australia}
\acmDOI{10.1145/3664647.3681556}
\acmISBN{979-8-4007-0686-8/24/10}
\settopmatter{printacmref=True}




\begin{document}

\title{Diffusion Posterior Proximal Sampling for Image Restoration}

\author{Hongjie Wu}  \orcid{0009-0007-3203-1521}
 \email{wuhongjie0818@gmail.com}
\affiliation{
 \institution{College of Computer Science, Sichuan University} 
 \city{Chengdu} \country{China}
 }
 \additionalaffiliation{\institution{Engineering Research Center of Machine Learning and Industry Intelligence, Ministry of Education, China}}
 \authornote{Equal contribution.}

\author{Linchao He}   \orcid{0000-0002-0562-3026}
 \email{hlc@stu.scu.edu.cn}
\affiliation{
 \institution{National Key Laboratory of Fundamental Science on Synthetic Vision, Sichuan University} \city{Chengdu} \country{China}
 }
  \additionalaffiliation{\institution{College of Computer Science, Sichuan University}}
\authornotemark[2]

\author{Mingqin Zhang}  \orcid{0009-0008-7273-9302}  \authornotemark[1] 
 \email{zhangmingqin@stu.scu.edu.cn}
 \affiliation{%
\institution{College of Computer Science, Sichuan University}
 \city{Chengdu} \country{China}
 }

\author{Dongdong Chen}  \orcid{0000-0002-7016-9288}
\email{d.chen@hw.ac.uk}
\affiliation{%
  \institution{Heriot-Watt University} \city{Edinburgh}
  \country{United Kingdom}
}

\author{Kunming Luo}  \orcid{0000-0002-5070-7392}
\email{kluoad@connect.ust.hk}
\affiliation{%
  \institution{Hong Kong University of Science and Technology}
  \city{Hong Kong}
  \country{China}
}

 \author{Mengting Luo} \authornotemark[3]  \orcid{0000-0003-2848-1849}
\affiliation{
 \institution{National Key Laboratory of Fundamental Science on Synthetic Vision, Sichuan University} \city{Chengdu} \country{China}
 }

  \author{Ji-Zhe Zhou} \authornotemark[1]
   \email{jzzhou@scu.edu.cn} 
   \orcid{0000-0002-2447-1806}
\affiliation{
\institution{College of Computer Science, Sichuan University} 
 \city{Chengdu} \country{China}
 }

\author{Hu Chen}  \orcid{0000-0001-9300-6572}
\email{huchen@scu.edu.cn}
\affiliation{
\institution{College of Computer Science, Sichuan University} 
 \city{Chengdu} \country{China}
 } 

\author{Jiancheng Lv} \authornotemark[1]   \orcid{0000-0001-6551-3884}
 \email{lvjiancheng@scu.edu.cn}
\authornote{Corresponding author.}
\affiliation{
\institution{College of Computer Science, Sichuan University} 
 \city{Chengdu}
 \country{China}
 }

\renewcommand\shortauthors{Hongjie Wu et al.}


\begin{abstract}
Diffusion models have demonstrated remarkable efficacy in generating high-quality samples.
Existing diffusion-based image restoration algorithms exploit pre-trained diffusion models to leverage data priors, yet they still preserve elements inherited from the unconditional generation paradigm.
These strategies initiate the denoising process with pure white noise and incorporate random noise at each generative step, leading to over-smoothed results.
In this paper, we present a refined paradigm for diffusion-based image restoration. Specifically, we opt for a sample consistent with the measurement identity at each generative step, exploiting the sampling selection as an avenue for output stability and enhancement. The number of candidate samples used for selection is adaptively determined based on the signal-to-noise ratio of the timestep.
Additionally, we start the restoration process with an initialization combined with the measurement signal, providing supplementary information to better align the generative process.
Extensive experimental results and analyses validate that our proposed method significantly enhances image restoration performance while consuming negligible additional computational resources.
\end{abstract}

\begin{CCSXML}
<ccs2012>
   <concept>
       <concept_id>10010147.10010178.10010224.10010245</concept_id>
       <concept_desc>Computing methodologies~Computer vision problems</concept_desc>
       <concept_significance>500</concept_significance>
       </concept>
   <concept>
       <concept_id>10010147.10010178.10010224.10010245.10010254</concept_id>
       <concept_desc>Computing methodologies~Reconstruction</concept_desc>
       <concept_significance>300</concept_significance>
       </concept>
   <concept>
       <concept_id>10010147.10010257</concept_id>
       <concept_desc>Computing methodologies~Machine learning</concept_desc>
       <concept_significance>100</concept_significance>
       </concept>
 </ccs2012>
\end{CCSXML}

\ccsdesc[500]{Computing methodologies~Computer vision problems}
\ccsdesc[300]{Computing methodologies~Reconstruction}
\ccsdesc[100]{Computing methodologies~Machine learning}

\keywords{Diffusion Model}


\maketitle

\section{Introduction}
\label{sec:Introduction}

\begin{figure}[ht]
  \centering
  \begin{tikzpicture}
        \node[anchor=south west,inner sep=0] (image) at (0,0) {\includegraphics[width=\linewidth]{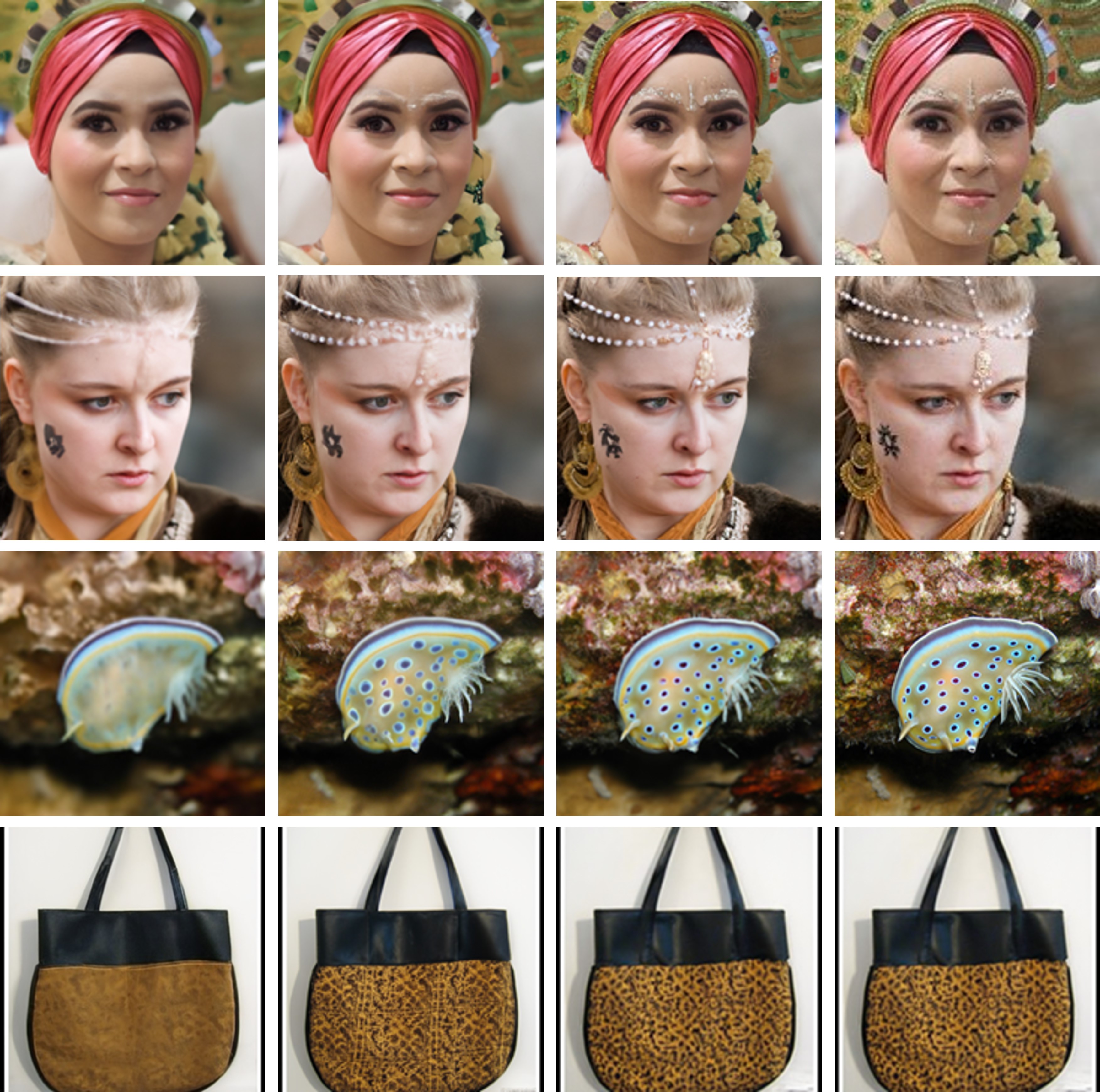}};
        \begin{scope}[x={(image.south east)},y={(image.north west)}]
            \node at (0.12,1.03) {{ DPS}};
            \node at (0.375,1.03) {{ Ours ($n$=2)}};
            \node at (0.625,1.03) {{ Ours ($n$=100)}};
            \node at (0.875,1.03) {{Reference}};
        \end{scope}
    \end{tikzpicture}
  \caption{Examples of the super-resolution ($\times$4) task to illustrate the efficiency of our method, where $n$ denotes the number of candidate samples, and DPS also refers to the case where $n=1$.}
  \Description{Examples of the super-resolution ($\times$4) task to illustrate the efficiency of our method, where $n$ denotes the number of candidate samples, and DPS also refers to the case where $n=1$.}
  \label{fig:n_number}
  \vspace{-0.3cm}
\end{figure}

\begin{figure*}[t]
\centering  \label{fig:main}
\includegraphics[width=1.\textwidth]{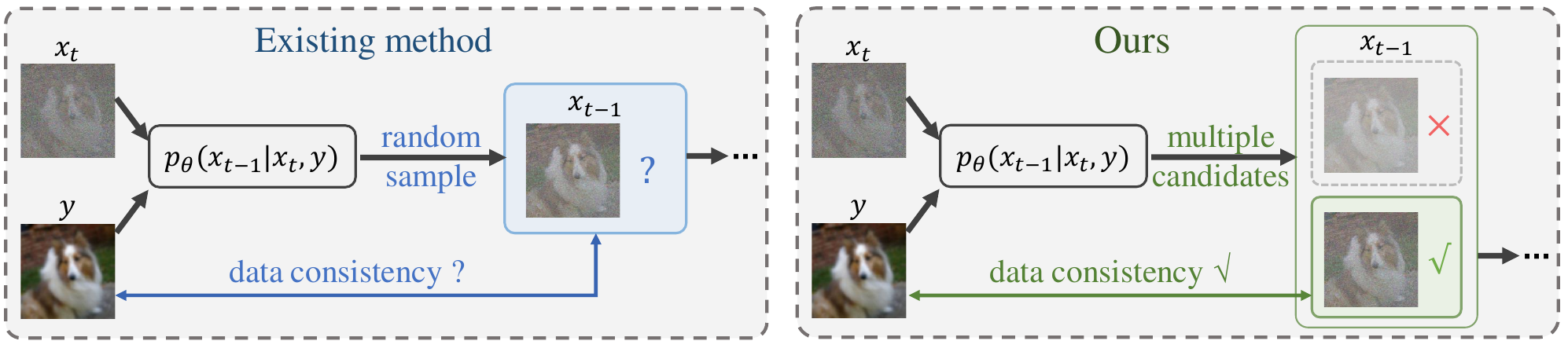}
\caption{High-level illustration of our proposed DPPS ($n$=2). Existing methods employ random sampling from the predicted distribution, our approach takes multiple samples and selecting the one with better data consistency (denoted by \textcolor{green}{$\checkmark$}) at each step.} \Description{High-level illustration of our proposed DPPS ($n$=2). Existing methods employ random sampling from the predicted distribution, our approach takes multiple samples and selecting the one with better data consistency (denoted by \textcolor{green}{$\checkmark$}) at each step.}
\vspace{-0.3cm}
\end{figure*}

Diffusion models~\cite{vincent2011connection,ho2020denoising,song2021scorebased} have gained broad attention due to the powerful ability to model complex data distribution, and have been applied to a wide range of tasks such as image generation~\cite{dhariwal2021diffusion,meng2021sdedit,rombach2022high}, molecule generation~\cite{corso2023diffdock}, and natural language processing~\cite{li2022diffusion}.
Recent research has demonstrated that pre-trained unconditional diffusion models can be effectively employed to address image restoration problems~\cite{song2022solving,kawar2022denoising,chung2022improving,wang2023zeroshot,song2023pseudoinverseguided} to leverage rich priors in a plug-and-play fashion, and achieve significant advancements.

However, for diffusion-based image restoration algorithms, they still retain a process inherited from unconditional generation. In particular, (i) these methods start generating the image that corresponds to the measurement with a white noise as initialization, and (ii) a fully stochastic noise is introduced at each step of the generation process.
We argue that this paradigm is inappropriate for solving image restoration tasks. Firstly, randomness enriches the diversity of the generated samples in unconditional generation~\cite{liu2020diverse,xiao2021tackling}. However, for image restoration tasks where the identity information of the measurement needs to be preserved~\cite{yao2024building}, randomness leads to uncontrollable generation outcomes~\cite{cao2023deep}.
Secondly, in the generative process of diffusion-based model, the current state is randomly sampled~\cite{chung2023diffusion,song2023pseudoinverseguided} from the predicted distribution without any correction.
Considering the Markovian reverse process, the sampling quality of the next state heavily depends on the current sampling result. If the current result deviates considerably from expectations, the subsequent generative process will encounter a significant exposure bias problem~\cite{ning2023input,li2023alleviating}.
As a consequence, randomness introduces fluctuations and instability into the generative process, ultimately leading to over-smoothed results.

In this paper, we advance toward a more refined paradigm for solving image inverse problems with pre-trained diffusion models. 
Specifically, to ensure the sampling result is closely consistent with the measurement identity, we opt for the proximal sample at each step from multiple candidate samples, as opposed to making a random choice. 
By mitigating uncertainties, the sample consistently progresses toward the desired targets, resulting in a stable generative process and improved output quality. Moreover,
we start the generation process with an initialization composed of both the measurement signal and white noise, rather than pure white noise. Consequently, the denoising process commences from a closer starting point, facilitating a faster convergence toward the desired result.

We theoretically analyze that our approach reduces variance in comparison to existing state-of-the-art methods which introduce random sampling. 
And we conduct extensive experiments to demonstrate the superior restoration capabilities of our method across diverse image restoration tasks, such as super-resolution (SR), deblurring, and inpainting, with only a marginal additional cost. Furthermore, sufficient experimental analyses are shown to validate the effectiveness of our proposed strategy.

In summary, our contributions are as follows:

\begin{itemize}
\item We pioneer to improve the generation quality by exploiting sampling choice from the predicted distribution of each reverse step.
\item We propose an efficient proximal sampling strategy that aligns with measurement identity to solve diffusion-based image restoration problems.
\item Extensive experimental results demonstrate that our proposed method outperforms other state-of-the-art methods in image restoration performance, requiring only minimal additional computation.

\end{itemize}

\section{Background}
\label{sec:Background}
\subsection{Denoising Diffusion Probabilistic Models}
Denoising Diffusion Probabilistic Models (DDPMs)~\cite{ho2020denoising} is a class of models that incorporate a forward (diffusion) process and a corresponding reverse (generative) process.
Consider a $T$-step forward process,
the noised sample $\mathbf{x}_t$ at time step $t$ can be modeled from the previous state $\mathbf{x}_{t-1}$:
\begin{equation}\label{eq:ddpmforward1}
    q(\mathbf{x}_{t}|\mathbf{x}_{t-1})=\mathcal{N}(\mathbf{x}_{t};\sqrt{1-\beta_{t}}\mathbf{x}_{t-1},\beta_{t}\mathbf{I}),    
\end{equation}
where $\mathcal{N}$ denotes the Gaussian distribution, and $\beta_{t}$ is a pre-defined parameter increasing with $t$.
Through reparameterization, $\mathbf{x}_t$ satisfies the following conditional distribution $q(\mathbf{x}_{t}|\mathbf{x}_{0})=\mathcal{N}(\mathbf{x}_{t};\sqrt{\Bar{\alpha}_{t}}\mathbf{x}_{0},\\(1-\Bar{\alpha}_{t})\mathbf{I})$, 
where $\Bar{\alpha}_{t} = \prod_{i=0}^{t}\alpha_{i}$ and $\alpha_{t} = 1- \beta_{t}$.

On the other hand, the reverse process of DDPM is formulated by the transition kernel parameterized by $\thetab$:
\begin{equation}\label{eq:ddpmreverse}
    p_\thetab(\mathbf{x}_{t-1}|\mathbf{x}_{t})=\mathcal{N}(\mathbf{x}_{t-1}; \mub_{\thetab}(\x_{t},t),\sigma_{t}^2 \mathbf{I}).
\end{equation}
where the learning objective aims to minimize the KL divergence between the forward and backward processes. The epsilon-matching objective is typically set as:
\begin{equation}\label{eq:epsilon-matching}
    \mathbf{L}({\thetab}) = \mathbb{E}_{\mathbf{x}_0 \sim q(\mathbf{x}_0), {\epsilonb} \sim {\Nc} (\mathbf{0}, \mathbf{I}), t \sim \mathbb{U}(\{1,...,T\})} [\|{\epsilonb} - {\epsilonb}_{\mathbf{\theta}}(\mathbf{x}_t,t)  \|^2].
\end{equation}
where $\mathbf{x}_0$ is sampled from training data and $\mathbf{x}_t \sim q(\mathbf{x}_t|\mathbf{x}_0)$. Once we have access to the well-trained $\epsilonb_\thetab(\x_t,t)$, a clean sample can be derived by evaluating the generative reverse process Eq.~\eqref{eq:ddpmreverse} step by step.

Furthermore, one can view the DDPM equivalent to the variance preserving (VP) form of the stochastic differential equation (SDE)~\cite{song2021scorebased}. 
Accordingly, the epsilon-matching objective Eq.~\eqref{eq:epsilon-matching} is equivalent to the denoising score-matching~\cite{sohl2015deep} objective with different parameterization:
\begin{equation}
\label{eq:score-matching}
\mathbf{L}({\thetab}) = \mathbb{E}_{\mathbf{x}_0, {\epsilonb}, t} [||\mathbf{s}_{{\thetab}} (\mathbf{x}_t, t) - \nabla_{\mathbf{x}_t} \log q(\mathbf{x}_t|\mathbf{x}_0) ||^2].
\end{equation}

\subsection{Diffusion Based Solvers for Image Restoration}
This paper focuses on solving the image restoration, or image inverse problems with unconditional diffusion model~\cite{kawar2022denoising,chung2023diffusion,song2023pseudoinverseguided}. Our goal is to retrieve the unknown $\x_0$ from a degraded measurement $\y$:
\begin{equation}
    \label{eq:ax+n}
    \mathbf{y} = \mathcal{A}(\mathbf{x}_0) + \mathbf{n},\quad \mathbf{y},\mathbf{n} \in \Rd^n,\, \mathbf{x} \in \Rd^{d},
\end{equation}
where $\mathcal{A}(\cdot): \Rd^d \mapsto \Rd^n$ is the degradation operator and $\mathbf{n} \sim \mathcal{N}(\mathbf{0},{\sigma_y^2}\Ib)$ denotes the measurement noise.
The restoration problem can be addressed via conditional diffusion models by substituting the score function $\nabla_{\mathbf{x}_t} \log (\mathbf{x}_t)$ in the reverse-time SDE with the conditional score function $\nabla_{\mathbf{x}_t} \log (\mathbf{x}_t|\mathbf{y})$, which can be derived by Bayes' rule:
\begin{equation}
    \label{eq:grad_log_bayes}
    \nabla_{\x_t} \log p_t(\x_t|\y) = \nabla_{\x_t} \log p_t(\x_t) + \nabla_{\x_t} \log p_t(\y|\x_t).
\end{equation}
The first prior term can be approximated by a well-trained score network. However, the analytic solution of the second likelihood term is computationally 
intractable, since there only exists an explicit connection between $\y$ and $\x_0$. 
To solve this dilemma, \citet{chung2023diffusion} propose diffusion posterior sampling (DPS) to approximate the likelihood using a Laplacian approximation:
$ \nabla_{\x_t} \log p(\y|\x_t) \simeq   \nabla_{\x_t} \log p (\y|\hat \x_{0|t}),$
where $\hat\x_{0|t}$ is the denoised estimate via Tweedie’s formula~\cite{stein1981estimation,efron2011tweedie}.
Consequently, the generative distribution $p_\thetab(\x_{t-1}| \\ \x_{t},\y)$
can be modeled as:     
\begin{align}   \label{eq:p(xt-1_xt_y)}
p_\thetab(\x_{t-1}|{\x}_{t},\y) : = \mathcal N \big(\x_{t};~{\mub}_{\thetab}(\x_{t},t,\y),~{\sigma_{t}^2}\Ib\big).
\end{align}
The mean of Gaussian distribution ${\mub}_{\thetab}(\x_{t},t,\y)$ is obtained by:
\begin{equation}
    \begin{aligned}       
{\mub}_{\thetab}(\x_{t},t,\y) = & \frac{1}{\sqrt{\alpha_t}} \Big( \x_{t} - \frac{\beta_t}{\sqrt{1-\bar{\alpha}_t}} {\epsilonb}_\thetab(\x_{t},t) \Big) \\
& -\lambda_t \nabla_{\x_t} ||\y-\mathcal{A} \hat \x_{0|t}||,
  \end{aligned}\label{eq:meanofdps}
\end{equation}  
where $\hat\x_{0|t}=\frac{1}{\sqrt{\bar{\alpha}_t}} \big(\x_t-\sqrt{1-\bar{\alpha}_t}{\epsilonb}_\thetab(\x_{t}, t)\big)$, and $\lambda_t$ is a tunable step size.

\section{Diffusion Posterior Proximal Sampling}

\subsection{Random Sampling Induces Uncertainties and Error Accumulation}
Existing approaches~\cite{song2022solving,kawar2022denoising,chung2022improving,wang2023zeroshot,song2023pseudoinverseguided} address image restoration or inverse problems following a process derived from unconditional generation. Specifically, the process (i) initiates the denoising process with pure white noise, and (ii) incorporates random noise at each reverse step.
We contend that randomness is unsuitable for restoration problems that demand the preservation of measurement identities, such as super-resolution or deblurring. Furthermore, the cumulative impact of random noise at each step results in the smoothing of output, thereby yielding low-quality generated samples.
Recent investigations~\cite{ma2023OptimalBoundary,everaert2023SignalLeak} align with our findings, highlighting that randomness introduces instability and fluctuations, ultimately culminating in suboptimal samples.

On the other hand, we observe a discrepancy in the inputs provided to the noise prediction network $\epsilonb_\thetab$. During the training phase, the network is fed with ground truth training samples. However, in the inference stage, the input $\x_t$ is randomly sampled from the predicted distribution. In cases where the predicted distribution is inaccurate, random sampling can exacerbate the deviation from the expected values, introducing substantial exposure bias~\cite{ning2023input} to the generative process.
While the data consistency update $\nabla_{\x_t} ||\y-\mathcal{A} \hat \x_{0|t}||$ does offer mitigation by adjusting the sample to align with the measurement, it demands delicate design and specifically-tuned parameters~\cite{song2023loss}. And the optimal parameter values vary across datasets and tasks. Consequently, the selection of sampling choices holds significance as it directly influences the input to the network, thereby affecting the generation quality.

\subsection{Proximal Sampling at Each Step}

To tackle the challenges posed by random sampling, in this paper, we propose to extract multiple candidate samples from the predicted distribution, and select the most proximal one~\cite{parikh2014proximal} to our anticipated target. Our motivation comes from the following idea:
considering $\x_0$ is an ideal but unknown solution for the image restoration problem, 
the sample taken from the predicted distribution $\x_{t-1}\sim p_\thetab(\x_{t-1}|\x_{t},\y) $ should be close to posterior $q(\x_{t-1}|\x_t,\x_{0})$, as it models the desired reverse process. 

The mathematical formulation of our selection process is given by:
\begin{align}
\label{eq:findz1}
    \x_{t-1} = &\argmin_{\x_{t-1}^i} \|\x_{t-1}^i -\x_{t-1}^* \|_2^2 
\end{align}
where $\x_{t-1}^i \sim p_\thetab(\x_{t-1}|\x_{t},\y),~i \in [0,n-1]$,  and $\x_{t-1}^*$ is our proposed deterministic solution~\footnote{the symbol $^*$ means an ideal but unknown solution.} via DDIM~\cite{song2020denoising} with unknown $\x_0$ (see supplementary material for detailed derivation):        
\begin{equation}
\begin{aligned}
    \x_{t-1}^*=&\sqrt{\Bar{\alpha}_{t-1}} {\x}_{0}+\sqrt{1-\Bar{\alpha}_{t-1}} \cdot \frac{\x_t-\sqrt{\Bar{\alpha}_{t}}\x_0}{\sqrt{1-\Bar{\alpha}_{t}}}\\
    =&C_1 \x_t+ C_2 \x_0.
\end{aligned}
\end{equation}
Here $C_1 = {\sqrt{1-\Bar{\alpha}_{t-1}}}/{\sqrt{1-\Bar{\alpha}_{t}}}$ and $C_2=\sqrt{\Bar{\alpha}_{t-1}} - {\sqrt{\Bar{\alpha}_{t}} \sqrt{1-\Bar{\alpha}_{t-1}}}/ \\{\sqrt{1-\Bar{\alpha}_{t}}}$ are introduced for simplicity.    

\begin{figure}[t]
    \centering
    \includegraphics[width=1.\linewidth]{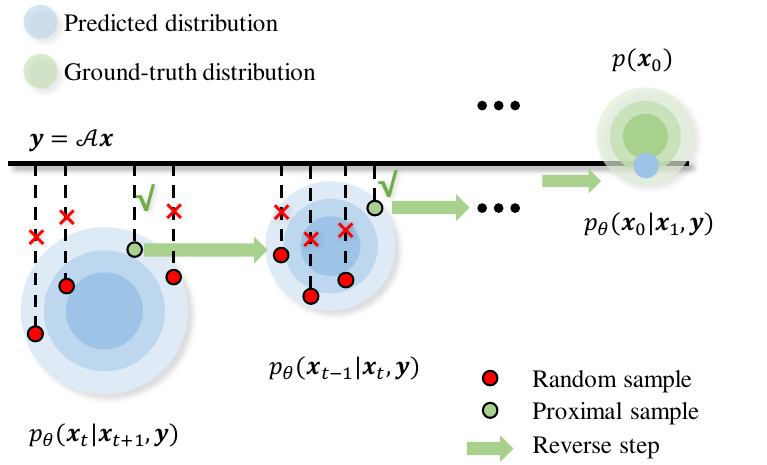}
    \caption{Conceptual illustration of our DPPS. Our method extract multiple candidate samples from the predicted distribution and choose the one with the highest measurement consistency. }
    \label{fig:manifold_anal} \Description{Conceptual illustration of our DPPS. Our method extract multiple candidate samples from the predicted distribution and choose the one with the highest measurement consistency.}
    \vspace{-0.3cm}
\end{figure}

The underlying motivation of our approach is depicted in \cref{fig:manifold_anal}. Without changing the diffusion denoiser,
our approach directs the sampling results toward a predefined target, mitigating the drawbacks induced by randomness. Moreover, the generative process remains stochastic, benefiting from the injected noise~\cite{bao2021analytic,karras2022elucidating} since it corrects prediction errors and imprecise parameter settings from previous steps. Consequently, our method converges more rapidly and attains superior results.

However, accessing $\x_0$ during inference is infeasible, rendering $\x_{t-1}^*$ also unknown.
Now, one important contribution of this paper is to project Eq.~\eqref{eq:findz1} onto the measurement subspace. This is feasible as $\Ac \x_{t-1}^*$ can be approximated under the condition $\y \approx \Ac \x_0$ when the degradation operator is linear and $\sigma_y$ is assumed within a moderate range.
\begin{equation}
\begin{aligned} \label{eq:Axtappr}
  \Ac \x_{t-1}^*  \approx C_1 \Ac \x_t+ C_2 \y. 
\end{aligned}
\end{equation}
Specifically, by projecting onto the measurement subspace, we choose $\x_{t-1}$ by the following:
\begin{align}
    \x_{t-1}  = \argmin_{\x_{t-1}^i} \|\Ac(\x_{t-1}^i -\x_{t-1}^*) \|_2^2
\end{align}
Since $\x_0$ is expected to be more accurate than the denoised result $\hat\x_{0|t}$, the measurement $\y \approx \Ac \x_0$ can provide a stable and strong supervisory signal to correct the sampling result.
Then by means of sample selection, we effectively control the result $\x_{t-1}$ within a more desirable region, even though we have no access to the $\x_0$.

As $\Ac$ is irreversible and complex in numerous scenarios~\cite{chen2021equivariant,chen2023imaging}, obtaining its pseudo-inverse poses challenges. Therefore, we propose to randomly sample $n$ candidates $\x_{t-1}^i, i \in [0,n-1]$ from the predicted distribution  $ p_\thetab(\x_{t-1}|\x_{t},\y)$, and choose the one with the minimal deviation from our anticipated values. Finally, Eq.~\eqref{eq:findz1} becomes:
\begin{equation}
\begin{aligned}
\label{eq:findz_f}
    \z_{t}  =& \argmin_{\z_{t}^i} \| \Ac\left(\mub_{\thetab}(\x_t,t,\y)+\sigma_t \z_t^i -C_1 \x_t \right) -C_2\y \|_2^2. 
\end{aligned}
\end{equation}
The process of sample selection is also denoted as noise selection, given by $\x_{t-1}^i = \mub_{\thetab}(\x_t,t,\y)+\sigma_t \z_t^i$.
We detail the full algorithm in \cref{alg:ours_alg}, and name our algorithm Diffusion Posterior Proximal Sampling (DPPS). 
The chosen $\x_{t-1}$ is referred to as the proximal sample due to the similarity to the proximal operator~\cite{parikh2014proximal}.

\subsection{Adaptive Sampling Frequency}
It is commonly perceived that the generative process of diffusion does not exhibit uniform significance across all timesteps~\cite{karras2022elucidating,kingma2023understanding,esser2024scaling}. Based on this insight, we adaptively adjust the number of candidate samples during the generative process according to the signal-to-noise ratio to enhance image quality and reduce computational cost. Through extensive experimentation, we discovered that using a larger number of candidate samples in the final stage of generation yields better results. The number of candidate samples $n$ at each step can be expressed as:
\begin{equation}
    n = \max (\lfloor N_{max} * (1-e^{-\lambda_t})\rfloor,~2) ,
\end{equation}
where $t \in [0, 999]$ is the timestep, $\lambda_t = \frac{\bar{\alpha}_t}{1-\bar{\alpha}_t}$ represents the signal-to-noise ratio, and $N_{max}$ is a hyper-parameter that adjusts the maximum sampling frequency. The minimum frequency is set to $2$, ensuring our proximal sampling strategy.

\subsection{Aligned Initialization}
Recent studies have pointed out that initializition have a significant impact on the generated results~\cite{singh2022conditioning,cao2023deep,ma2023OptimalBoundary}, and the discrepancy between the two distributions account for a discretization error~\cite{benton2023linear}.
In this paper, inspired by \cite{everaert2023SignalLeak}, we simply initialize the sample in the same way as during training, making the best use of the available measurement
\begin{equation}    
\begin{aligned}
    \x_T  = \sqrt{\bar{\alpha}_T} \Ac^T \y + \sqrt{1-\bar{\alpha}_T} \epsilonb \quad
 \epsilonb \sim \Nc(\mathbf{0},\Ib).
\end{aligned}
\end{equation}
where $\Ac^T$ means the transpose of operator. 
We argue this trick provides a modicum of information as a signal to the reverse model, as it realigns the distribution of initial latent with the training distribution~\cite{everaert2023SignalLeak}.

\begin{algorithm}[t]
    \caption{Diffusion Posterior Proximal Sampling}
    \label{alg:ours_alg}
    \begin{algorithmic}[1]
        \REQUIRE $T$, $\y$, \{$\lambda_t\}_{t=1}^T, {\{\sigma_t\}_{t=1}^T}$, $n$
        \STATE $\x_T = \sqrt{\bar{\alpha}_T} \Ac^T \y + \sqrt{1-\bar{\alpha}_T} \epsilonb, \quad  \epsilonb \sim \Nc(\mathbf{0},\Ib)$
        \FOR{$t=T$ {\bfseries to} $1$}
            \STATE compute $\mub_{\thetab}(\x_t,t,\y)$ via \eqref{eq:meanofdps}
            \FOR{$i=n$ to $1$}
            \STATE $ {\z_t^i} \sim\Nc(\mathbf{0},\Ib)$
            \STATE $D_i$ = $\| \Ac\big(\mub_{\thetab}(\x_t,t,\y) + \sigma_t {\z_t^i}   - C_1 \x_t \big) -  C_2 \y \|_2^2 $       \ENDFOR
            \STATE $\z_t = \argmin_{\z_t^i} D_i$
            \STATE {$\x_{t-1} \gets \mub_{\thetab}(\x_t,t,\y) + \sigma_t \z_t $ }
        \ENDFOR        
        \STATE \textbf{return} $\hat{\x}_0$
    \end{algorithmic}
\end{algorithm}

\begin{figure*}[ht]
    \centering
    \begin{tikzpicture}
        \node[anchor=south west,inner sep=0] (image) at (0,0) {\includegraphics[width=\textwidth]{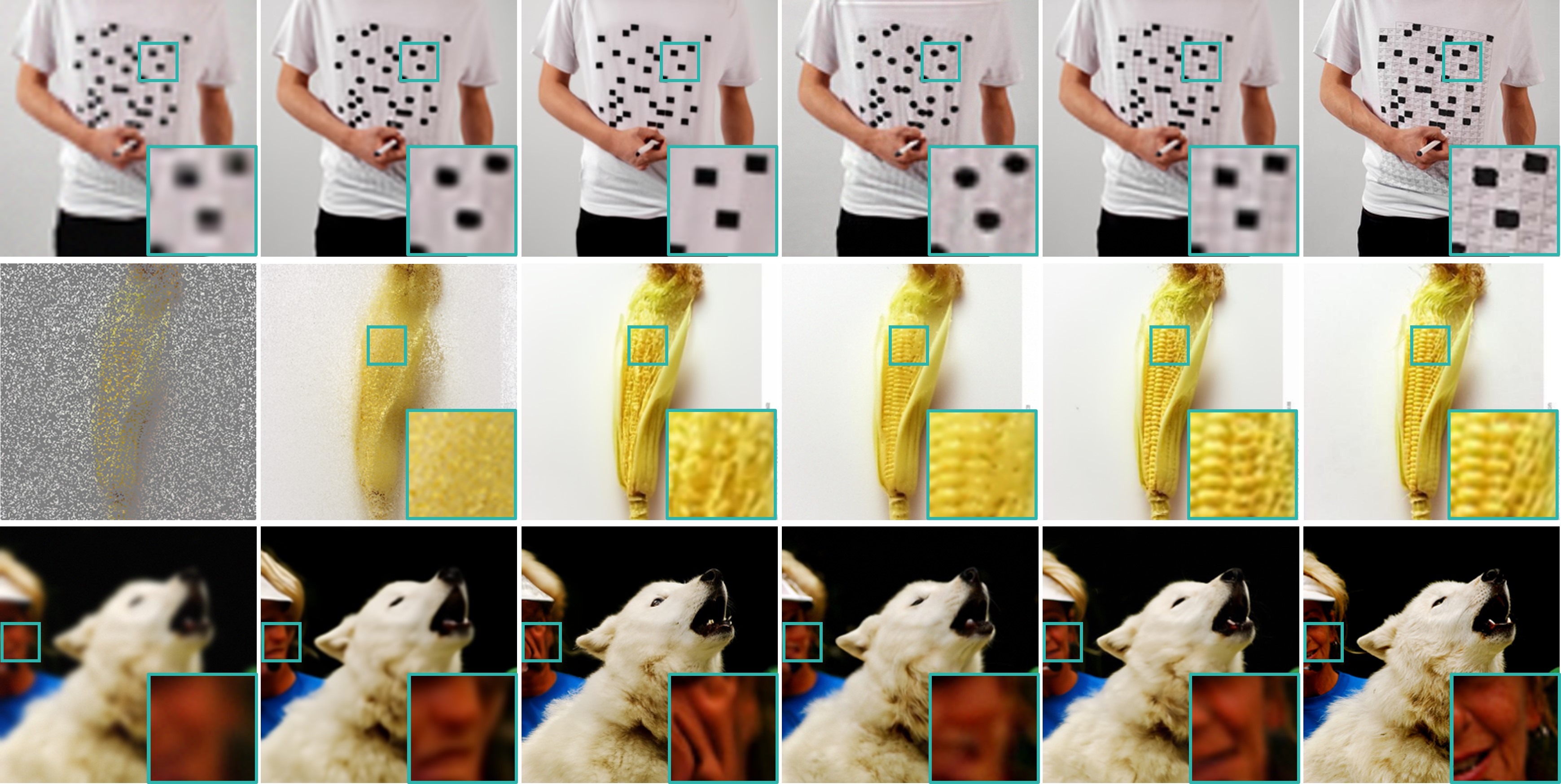}};
        \begin{scope}[x={(image.south east)},y={(image.north west)}]
            \node at (0.085,1.04) {\large{Input}};
            \node at (0.248,1.04) {\large{DDRM}};
            \node at (0.415,1.04) {\large{DPS}};
            \node at (0.58,1.04) {\large{DiffPIR}};
            \node at (0.745,1.04) {\large{Ours}};
            \node at (0.915,1.04) {\large{Reference}};
        \end{scope}
    \end{tikzpicture}
    \caption{Image restoration results with $\sigma_y=0.01$. Row 1: SR ($\times$4), Row 2: 80\% inpainting, Row 3: Gaussian deblurring.}
    \label{fig:results_linear} \Description{Image restoration results with $\sigma_y=0.01$. Row 1: SR ($\times$4), Row 2: 80\% inpainting, Row 3: Gaussian deblurring.}
    \vspace{-0.2cm}
\end{figure*}

\subsection{Discussion}
 
\subsubsection{Relevance to Existing Methods.}
DPS~\cite{chung2023diffusion} proposes to estimate the intractable likelihood under Laplacian approximation, and adopts random sampling from the predicted distribution. However, the randomly chosen sample may not be well-coordinated with the measurement information.
Another study MCG~\cite{chung2022improving} iteratively applies projection onto the measurement subspace after the denoising step to ensure data consistency.
However, the imposed projections lead to accumulated error due to measurement noise.

Our algorithm can be viewed as a special case of DPS, where we choose the sample that exhibits better measurement consistency. It absorbs the advantages of the robustness to noise from DPS and the faithful data consistency of MCG.
Moreover, the better data consistency is achieved by injecting a guided 
adaption $\z_t$, which also helps correct the prediction errors in last generation steps~\cite{karras2022elucidating,jolicoeur2021gotta}, leading to enhanced output quality.

\subsubsection{Computational Efficiency.}
Selecting the proximal sample introduces some extra computational overhead. In practice, the diffusion model conducts back-propagation only once, aligning with the mainstream  approaches~\cite{song2022solving,chung2022improving,chung2023diffusion,song2023loss}. The evaluations for Eq.~\eqref{eq:findz_f} are considerably cheaper than gradient calculation, making the additional computational costs manageable.
We show the computational resources for different settings in \cref{sec:Compute_time}.

\subsubsection{Theoretical Analysis.}
Here, we provide some theoretical supports for our methodology. Despite our proposed proximal sampling being random, we can theoretically approach the desired point within an upper bound with mild assumptions.
The proposition, detailed in supplementary material, proves that we can converge to our desired target $\x_{t-1}^*$, when the number of candidate samples is large enough.
Furthermore, our method doesn't require an exceptionally large number of candidates to achieve promising performance. This is because the variance of our method decreases when the proximal one is selected from any number of candidates.

\begin{proposition}  \label{prop:var}
    For the random variable $\z \sim \mathcal{N}(\mathbf{0},\Ib)$ and its objective function:
    \begin{equation}
        f(\z) = \|\Ac(\mub_{\thetab}(\x_t,t,\y)+\sigma_t \z 
        -C_1 \x_t ) -C_2\y \|^2_2.
    \end{equation}
    The variance for $f(\z)$ is denoted by $\text{Var}\left(f(\z)\right)$. We have
    \begin{equation}
        \text{Var}_\text{DPS}\left(f(\z)\right) > \text{Var}_\text{MC}\left(f(\z)\right) > \text{Var}_\text{Ours}\left(f(\z)\right).
    \end{equation}
    Here, $\text{Var}_\text{DPS}\left(f(\z)\right)$ is the variance of DPS~\cite{chung2023diffusion}, $\text{Var}_\text{MC}\left(f(\z)\right)$ is the variance of Monte Carlo sampling, and $\text{Var}_\text{Ours}\left(f(\z)\right)$ is the variance of our proximal sampling method.
\end{proposition}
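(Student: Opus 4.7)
The plan is to split the chain of inequalities into its two halves and handle each separately, using only that $f(\z) = \|\sigma_t \Ac \z - \u\|_2^2$ with $\u := C_2\y - \Ac(\mub_{\thetab}(\x_t,t,\y) - C_1 \x_t)$ is a non-negative generalized noncentral $\chi^2$ random variable when $\z \sim \Nc(\mathbf{0},\Ib)$. First I would pin down each variance concretely: DPS returns a single evaluation $f(\z)$, Monte Carlo returns the sample mean $\bar{f}_n = \tfrac{1}{n}\sum_{i=1}^n f(\z^i)$ over $n$ independent draws, and our method returns the order statistic $\min_{i=1,\dots,n} f(\z^i)$. The first inequality $\text{Var}_\text{DPS} > \text{Var}_\text{MC}$ is then immediate from the standard i.i.d. sample-mean identity $\text{Var}(\bar{f}_n) = \text{Var}(f(\z))/n$, which is strictly less than $\text{Var}(f(\z))$ whenever $n \geq 2$.

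The second inequality is the substantive one. Letting $F$ and $p$ denote the CDF and density of $f(\z)$, the density of $\min_i f(\z^i)$ is $n\,p(y)\,(1-F(y))^{n-1}$. Because $(1-F(y))^{n-1}$ sharply suppresses mass away from the essential infimum of $f(\z)$, and because the support of $f$ is bounded below by $0$, the minimum concentrates near its infimum. Under the mild regularity that $p$ is bounded in a neighborhood of this infimum (which holds generically when $\Ac\Ac^T$ is non-degenerate), a standard order-statistic estimate yields $\text{Var}(\min_i f(\z^i)) = O(1/n^2)$, whereas $\text{Var}(f(\z))/n = \Theta(1/n)$; this delivers $\text{Var}_\text{MC} > \text{Var}_\text{Ours}$ for $n$ sufficiently large.

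The main obstacle is precisely this second comparison at small $n$: for distributions whose density vanishes or is very flat near the essential infimum, one can engineer toy examples at $n=2$ in which $\text{Var}(\min) > \text{Var}(f)/n$, so a purely generic bound does not suffice. I would cope with this in two complementary ways. First, I would invoke the adaptive-frequency schedule, which in practice enforces moderately large $n$ in the stages where concentration matters most. Second, I would exploit the explicit quadratic structure of $f$ to obtain the density of $\min_i f(\z^i)$ in closed form (or to bound it sharply via the noncentral $\chi^2$ moment generating function) and verify the variance inequality directly for the operating $n$. Combined with the companion convergence statement in the supplementary material, which shows that the proximal sample approaches $\x_{t-1}^*$ as $n \to \infty$, these ingredients yield the full theoretical justification for choosing the proximal sample over both DPS and vanilla Monte Carlo estimators.
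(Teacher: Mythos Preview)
Your approach differs substantially from the paper's, and in a way that is more careful. For the first inequality you and the paper agree: a single draw versus the sample mean of $N$ i.i.d.\ draws, giving $\text{Var}(f(\z))$ versus $\text{Var}(f(\z))/N$ (the paper appeals to the CLT; your exact identity is cleaner). For the second inequality the paper does not use order statistics at all; it argues that because the proximal method ``selects only the samples corresponding to the smallest function values,'' the resulting values form ``a subset'' of those used by Monte Carlo, and concludes ``obviously'' that the variance is smaller. That is a heuristic, not a proof: restricting to the minimum over each batch is not a subset operation on the sample space in any sense that forces smaller variance, and indeed for $f$ uniform on $[0,1]$ one computes $\text{Var}(\min_{i\le 2} f_i)=1/18 > 1/24=\text{Var}(\bar f_2)$, so the claimed inequality can fail at $n=2$ without further structure.

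Your order-statistic route---density $n\,p(y)(1-F(y))^{n-1}$, concentration near the essential infimum, and the rate comparison $O(1/n^2)$ versus $\Theta(1/n)$---is the right machinery, and your acknowledgment that the small-$n$ case requires the specific quadratic/noncentral-$\chi^2$ structure of $f$ (or the adaptive schedule pushing $n$ large) is exactly the honest caveat the paper's proof omits. In short: the paper's argument for $\text{Var}_{\text{MC}}>\text{Var}_{\text{Ours}}$ is informal and, as stated, does not establish the inequality; your plan is a genuine strengthening, with the small-$n$ gap correctly flagged as the place where work remains.
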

In line with our theoretical analysis, empirical observations suggest that the proximal sample strategy demonstrates improved data consistency and reduced stochasticity, as evidenced by the experimental results.

\begin{table*}[t]
\centering 
\caption{\centering
Quantitative evaluation of image restoration tasks on FFHQ 256$\times$256-1k and ImageNet 256$\times$256-1k with $\sigma_y=0.01$. \textbf{Bold}: best, \underline{underline}: second best.
}
\vspace{-0.3cm}
\resizebox{1.0\textwidth}{!}{
\begin{tabular}{lcccccccccccc}
\toprule
{\textbf{}} & \multicolumn{3}{c}{\textbf{Inpaint (random)}} & \multicolumn{3}{c}{\textbf{Deblur (Gaussian)}} &
\multicolumn{3}{c}{\textbf{Deblur (motion)}} &\multicolumn{3}{c}{\textbf{SR ($\times$ 4)}} \\
\cmidrule(lr){2-4}
\cmidrule(lr){5-7}
\cmidrule(lr){8-10}
\cmidrule(lr){11-13}
{\textbf{Method}} & {PSNR $\uparrow$~} &  {LPIPS $\downarrow$~} & {FID $\downarrow$~}& {PSNR $\uparrow$~} &  {LPIPS $\downarrow$~} & {FID $\downarrow$~} & {PSNR $\uparrow$~} &  {LPIPS $\downarrow$~} & {FID $\downarrow$~} & {PSNR $\uparrow$~} &  {LPIPS $\downarrow$~} & {FID $\downarrow$} \\
\toprule
\multicolumn{13}{c}{\textbf{FFHQ}} \\
\toprule
PnP-ADMM~\cite{chan2016plug} & 27.32 & 0.349 & 63.19 & 25.97  & 0.260 & 94.50 & \underline{25.86} & 0.278 & 59.06 & \underline{26.94} & 0.292 & 90.11 \\
Score-SDE~\cite{song2021scorebased}~ & 21.91  & 0.371 & 58.19 & 18.37 & 0.629 & 169.68 & 15.79  & 0.635 & 176.73 & 24.10 & 0.363 & 75.50 \\
MCG~\cite{chung2022score}& 24.59 & 0.265 & 29.31  & 19.67 & 0.497 & 85.85 & 18.00 & 0.604 & 91.73  & 24.02 & 0.321 & 52.88 \\
DDRM~\cite{kawar2022denoising} & 21.85 & 0.378 & 78.40 & \underline{26.38} & 0.298 & 62.72 & - & - & - & \textbf{27.31} & 0.248 & 52.49 \\
DPS~\cite{chung2023diffusion} & 27.29   & \underline{0.182} & \underline{23.11}  & 26.04 & \underline{0.228} & \underline{30.52} & 25.24 & \underline{0.261} & \underline{33.47} & 26.55   & \underline{0.237} & 34.50 \\
LGD-MC~\cite{song2023loss} & 26.11 & 0.257 & 35.26 & 24.08 & 0.288 & 34.54 & 21.04 & 0.370 & 49.19 & 26.12 & 0.247  & 33.68 \\
DiffPIR~\cite{zhu2023denoising} & \textbf{27.96} & 0.210 & 30.38 & 25.38 & 0.276 & 32.00 & 22.74 & 0.331 & 83.42 & 24.74 & 0.273 & \underline{33.03} \\
\midrule
\rowcolor{lightblue}
Ours & \underline{27.50} & \textbf{0.161}  & \textbf{18.65} & \textbf{26.42} & \textbf{0.221} & \textbf{28.93} & \textbf{27.82} & \textbf{0.197} & \textbf{28.28} & 26.94 & \textbf{0.201}  & \textbf{25.98}  \\
\bottomrule
\multicolumn{13}{c}{\textbf{ImageNet}} \\
\toprule
PnP-ADMM~\citep{chan2016plug} & 25.14 & 0.405 & 66.54  & 21.97 & 0.419 & 63.15 & \textbf{21.86} & \underline{0.408} & \underline{61.46} & 23.95 & 0.346 & 71.41 \\
Score-SDE\cite{song2021scorebased}~ & 16.25  & 0.653 & 102.56 & 21.31 & 0.467 & 82.54 & 13.56 & 0.661 & 89.62 & 17.69 & 0.624 & 96.67 \\
MCG~\cite{chung2022score}& 23.21 & 0.324 & 44.09 &  12.31 & 0.647 & 109.45 & 18.32 & 0.633& 99.26 & 17.08 & 0.538 & 85.91  \\
DDRM~\cite{kawar2022denoising} & 19.34 & 0.555 & 147.00 & \textbf{23.67} & 0.401 & 66.99 & - & - & - & \textbf{25.49}  & 0.319 & 54.77 \\
DPS~\citep{chung2023diffusion} & \underline{25.65}  & 0.240 & \underline{29.04}  & 19.65 & 0.422 & 65.35 & 18.79 & 0.458 & 77.29 & 23.88    & 0.335 & 42.83 \\
LGD-MC~\cite{song2023loss} & 24.06 & 0.316 & 40.95 & 20.32 &0.423 &62.79 & 19.07&0.461 & 78.79 & 22.78& 0.390 & 59.61\\
DiffPIR~\cite{zhu2023denoising} & \textbf{25.85}  & \underline{0.235} & 33.16 & 22.03  & \underline{0.395} & \underline{54.71} & 19.86 & 0.433 & 79.23 & \underline{24.78} & \underline{0.302} & \underline{39.25} \\
\midrule
\rowcolor{lightblue}
Ours & 24.97 & \textbf{0.217}  & \textbf{24.90} & \underline{22.70}  & \textbf{0.364} & \textbf{51.21} & \underline{21.65} & \textbf{0.375} & \textbf{51.35} & 24.44 & \textbf{0.267}  & \textbf{30.70}  \\

\bottomrule
\end{tabular}
}
\label{tab:results_all}
\vspace{-0.2cm}
\end{table*}

\section{Related Work}

\subsection{Diffusion Models for Image Restoration}
To solve image restoration or image inverse problems with diffusion models, plenty of notable works~\cite{dhariwal2021diffusion,kawar2022denoising,chung2023diffusion,song2023pseudoinverseguided} have been introduced. The first category of approaches~\cite{saharia2021image,dhariwal2021diffusion,saharia2022palette} involves training conditional diffusion models or approximating the likelihood with synthetic image pairs as training data. However, these methods require specific training for each task and lack generalization across a wide range of inverse problems.
In contrast, the second category of approaches addresses inverse problems with unconditional diffusion models by guiding the reverse diffusion process~\cite{song2022solving,zhu2023denoising}. Several studies~\cite{song2023pseudoinverseguided,chung2023diffusion,song2023loss} concentrate on estimating the time-dependent likelihood $p_t(\y|\x_t)$ for posterior sampling. Among them, \citet{song2023loss} enhances the likelihood approximation with a Monte-Carlo estimate. \citet{ma2023OptimalBoundary} tackles image super-resolution by selecting the best starting point. 
\cite{jin2024des3,jiang2023low,hou2024global} effectively constrain the results of each generative step, adeptly tackling low-light image enhancement and shadow removal.

Recently, latent diffusion models have also seen advancements~\cite{rombach2022high,vahdat2021score} and have been widely adopted in various image inverse problem scenarios~\cite{rout2023solving,he2023iterative,song2023SolvingIP,chung2023PrompttuningLD,rout2023BeyondFT}. Furthermore, while several deep learning-based proximal optimization algorithms have been proposed~\cite{wei2022tfpnp,lai2023prox}, our method stands out as the first to make the diffusion sampling as a proximal optimization process.

\subsection{The Exposure Bias in Diffusion Models}

The exposure bias~\cite{ning2023input,li2023alleviating} in diffusion models is described as the misalignment between the training input $\x_t$ and the reverse process input $\x_t$, which is essentially a mismatch between the predicted noise $\epsilonb_\thetab$ and the actual noise $\epsilonb$. \citet{ning2023input} mitigate the exposure bias problem by perturbing the training input, rendering the network more robust to inaccurate inputs during generation. \citet{ning2023elucidating} propose dynamic scaling to correct the magnitude error of $\epsilonb_\thetab$ and improve sample quality. \citet{li2023alleviating} present a novel solution by adjusting the corruption level of the current samples.
This paper provides a new way to effectively alleviate the exposure bias problem to some extent by reducing the uncertainty of random sampling.

\section{Experiments}
\subsection{Setup}
\paragraph{Dataset, Model.}
To showcase the effectiveness of the proposed methods, we conducted experiments on two standard datasets, namely FFHQ 256$\times$256~\cite{karras2019style} and ImageNet 256$\times$256~\cite{russakovsky2015imagenet}.  The evaluation encompasses the first 1k images in the validation set of each dataset. The diffusion models pre-trained on ImageNet and FFHQ are sourced from \citet{dhariwal2021diffusion} and \citet{chung2023diffusion}, respectively.
For comprehensive comparisons, we include state-of-the-art diffusion-based image restoration solvers, including the Plug-and-play alternating direction method of multipliers (PnP-ADMM)~\cite{chan2016plug}, Score-SDE~\cite{song2022solving}, Denoising Diffusion Restoration Models (DDRM)~\cite{kawar2022denoising}, Manifold Constrained Gradient (MCG)~\cite{chung2022improving}, DPS~\cite{chung2023diffusion}, LGD-MC~\cite{song2023loss}, and DiffPIR~\cite{zhu2023denoising}. Sampling frequency parameter $N_{max}$ is set to 50. To ensure fair comparisons, we utilized the same pre-trained diffusion models for all diffusion-based methods. 
All experiments were executed with a fixed random seed.

\begin{figure*}[t]
    \centering
    \begin{tikzpicture}
        \node[anchor=south west,inner sep=0] (image) at (0,0) {\includegraphics[width=\textwidth]{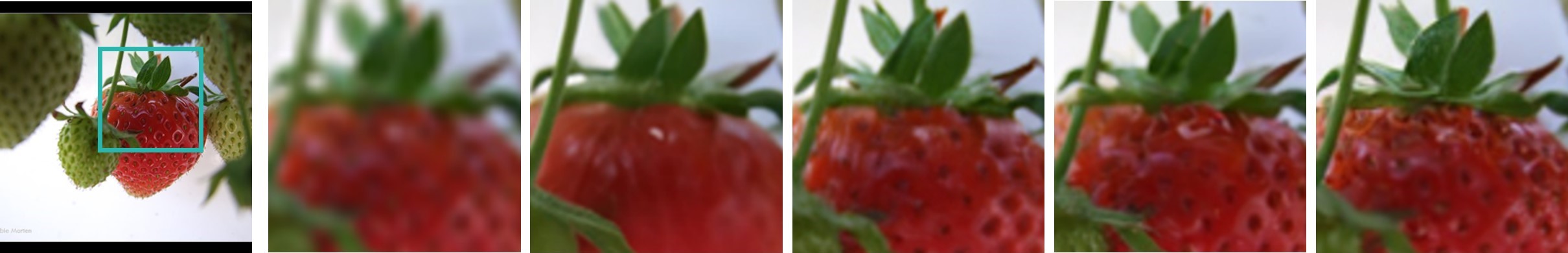}};
        \begin{scope}[x={(image.south east)},y={(image.north west)}]
            \node at (0.08,1.06) {\large{Reference}};
            \node at (0.25,1.06) {\large{Input}};
            \node at (0.42,1.06) {\large{DPS}};
            \node at (0.58,1.06) {\large{Ours ($n$=10)}};
            \node at (0.75,1.06) {\large{Ours ($n$=20)}};
            \node at (0.92,1.06) {\large{Ours ($n$=100)}};
        \end{scope}
    \end{tikzpicture}
    \caption{Visual results on SR ($\times$4) task to demonstrate the efficacy of our proposed method and to explore the impact of $n$.} \Description{Visual results on SR ($\times$4) task to demonstrate the efficacy of our proposed method and to explore the impact of $n$.}
    \label{fig:cover} 
    \vspace{-0.3cm}
\end{figure*}

\paragraph{Degradation Operator.}
The degradation operators are defined as follows:
(i) For inpainting, 80\% of the pixels (all RGB channels) in the image are masked. (ii) For Gaussian blur, a blur kernel of size $61\times61$ with a standard deviation of $3.0$ is employed. (iii) For motion blur, the kernel, following the procedure outlined in \citet{chung2022improving}, has a size of $61\times61$ and an intensity value of $0.5$. (iv) For SR $\times~4$, the operator involves $4\times$ bicubic down-sampling.
Additive Gaussian noise with a variance of $\sigma_y=0.01$ is applied for all degradation. 

\paragraph{Metrics.}
The metrics employed for the comparison encompass: Peak Signal-to-Noise Ratio (PSNR) as distortion metrics;  Learned Perceptual Image Patch Similarity (LPIPS)~\cite{Zhang2018TheUE} distance, and Frechet Inception Distance (FID)~\cite{Heusel2017GANsTB} as perceptual metrics.

\subsection{Main Results}

We present the statistic results of the general image restoration tasks on both FFHQ and ImageNet datasets, as detailed in \cref{tab:results_all}.
To the dataset with homogeneous scenarios, i.e., FFHQ, our proposed method demonstrates impressive results across all metrics, establishing its superiority over existing state-of-the-art methods.
When dealing with more varied images within the ImageNet dataset, our approach exhibits substantial outperformance against all baseline methods in terms of FID and LPIPS, while maintaining comparable levels in PSNR.

The visual results for inpainting, SR, and Gaussian deblurring are shown in \cref{fig:results_linear}, showcasing the evident superiority of our proposed method.
While DDRM attains commendable distortion results with fewer Neural Function Evaluations (NFEs), 
it faces limitations in reliably restoration results for image inpainting tasks characterized by a very low rank of the measurement.
DiffPIR achieves satisfactory results in various scenarios; however, its performance is tied to the analytical solution for the data consistency term and exhibits sensitivity to measurement noise.
DPS differs from our approach in that it introduces random noise at each generation step, leading to over-smoothed and unstable restoration results, as depicted in \cref{fig:results_linear} (column 3). Conversely, our proposed method circumvents such drawbacks. The generative process is stabilized by directing the sample to a predefined target, yielding realistic and detailed restoration outcomes.
It is noteworthy that the generated results by our method exhibit minimal variation with different seeds (refer to supplementary material), aligning with the intended design of identity preserving.

\subsection{The Effect of Proximal Sampling} \label{sec:result_selection}

\begin{figure}[t]
    \centering  
    \includegraphics[width=1.\linewidth]{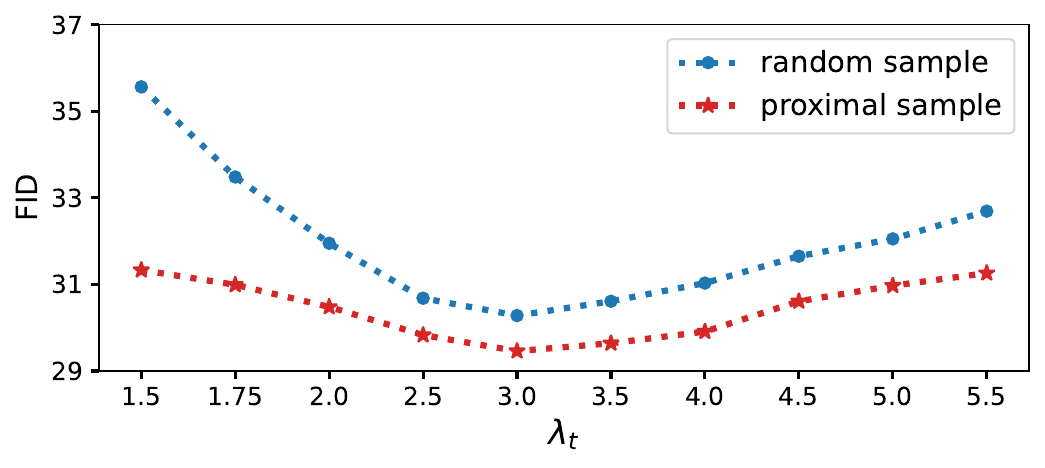}
    \caption{FID comparison between random sampling and proximal sampling with different values of $\lambda_t$. } \Description{FID comparison between random sampling and proximal sampling with different values of $\lambda_t$.}
    \label{fig:fid_lambda}
\vspace{-0.3cm}
\end{figure}

\begin{table}[t]
\centering
\caption{Comparision of computational resources and performance with different $n$ on FFHQ SR$\times$ 4.}
\resizebox{1. \linewidth}{!}{%
\begin{tabular}{lccc}
\toprule
Setting  & Memory / Growth  & Speed / Growth  & LPIPS / Gain   \\
\midrule
{$n$=1}  & 2599~MB /~ 0.0~\%  &56.36~s /~ 0.0~\% & 0.235~/~ 0.0~\% \\
{$n$=2}  & 2603~MB /~ 0.2~\%  &56.64~s /~ 0.5~\% &  0.221~/~ 6.0~\% \\
{$n$=10}  & 2621~MB /~0.8~\%  &56.82~s /~ 0.8~\% &  0.207~/~ 11.9~\%\\
{$n$=20}  & 2643~MB /~ 1.7~\% &57.25~s   /~ 1.5~\% & 0.203~/~ 13.6~\% \\
\rowcolor{lightblue}
adaptive $n$ & 2714~MB /~ 4.4~\% &56.86~s   /~ 0.9~\% & 0.201~/~ ~14.5\% \\
\bottomrule
\end{tabular}}
\label{tab:time_memory}
\vspace{-0.3cm}
\end{table}

\begin{figure*}[t] 
\centering 
\subfigure[MSE]{\includegraphics[width=0.33\textwidth]{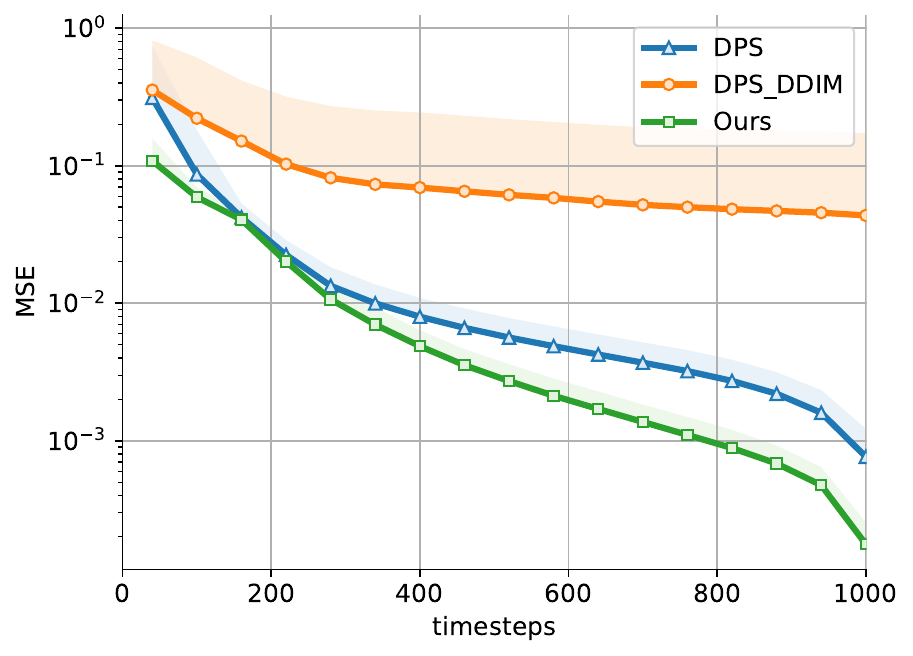}}
\subfigure[LPIPS]{\includegraphics[width=0.33\textwidth]{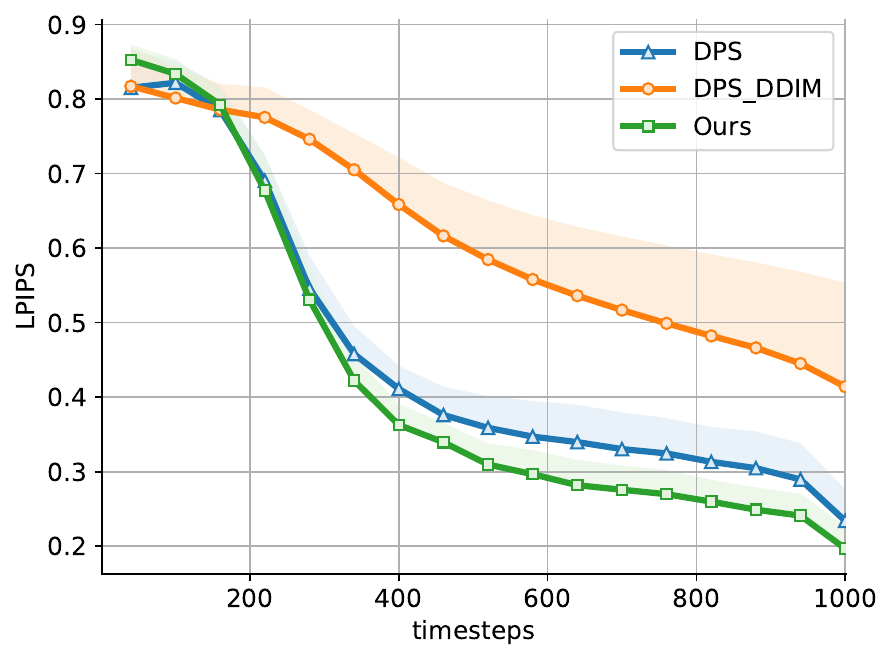}}
\subfigure[PSNR]{\includegraphics[width=0.33\textwidth]{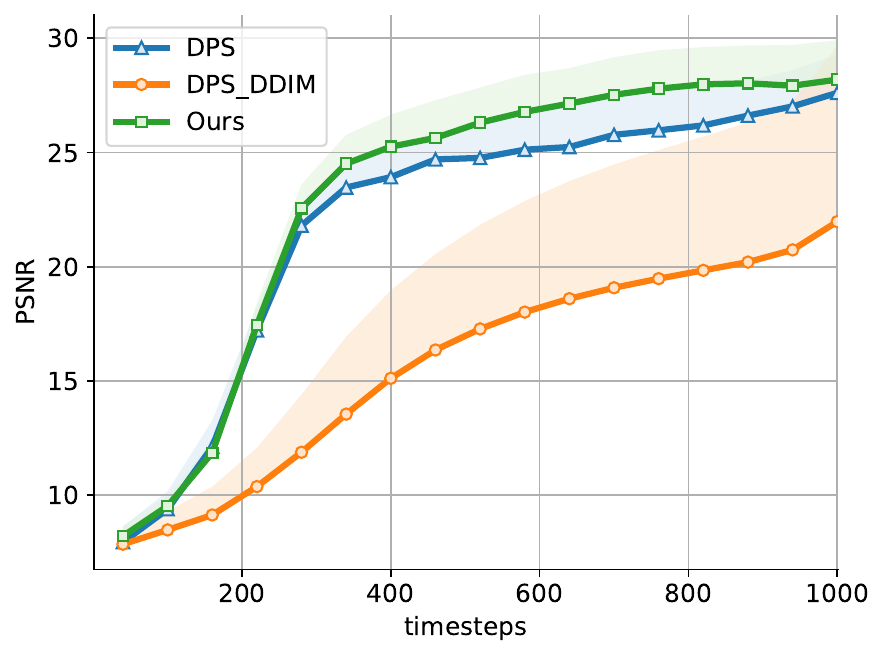}}
\caption{Convergence speed analysis. (a) mean error $\|\y-\Ac\hat{\x}_{0|t}\|^2_2$, (b) average LPIPS of $\hat{\x}_{0|t}$, and (c) average PSNR of $\hat{\x}_{0|t}$ with timesteps respectively. Our method achieves a faster optimization process and better restoration performance.} \Description{Convergence speed analysis. (a) mean error $\|\y-\Ac\hat{\x}_{0|t}\|^2_2$, (b) average LPIPS of $\hat{\x}_{0|t}$, and (c) average PSNR of $\hat{\x}_{0|t}$ with timesteps respectively. Our method achieves a faster optimization process and better restoration performance.}
\label{fig:Convergence_speed} 
\vspace{-0.3cm}
\end{figure*}

\subsubsection{Faster Convergence.}To investigate the impact of our proximal sampling on the generated results, we conducted further experimental analyses using the first 150 images from the FFHQ validation set. In addition to the naive DPS, we incorporated the results of DPS with DDIM~\cite{song2020denoising} deterministic sampling, denoted as DPS\_DDIM, which mitigates the effects of randomness. Results for (a) mean of $\|\y-\Ac\hat{\x}_{0|t}\|_2^2$, (b) average LPIPS of $\hat{\x}_{0|t}$, and (c) average PSNR of $\hat{\x}_{0|t}$ over timesteps are reported in \cref{fig:Convergence_speed}. It is evident that our method facilitates a more stable optimization process, yielding markedly superior results compared to DPS and DPS\_DDIM within the same period. This observation also supports the claim that the measurement can provide reliable supervisory signals for the sampling result.

\subsubsection{More Robust to Hyper-parameter.} We anticipate that our proposed proximal sampling serves as an adaptive correction for prediction errors resulting from imprecise parameter settings. To substantiate this claim,
we conducted experiments with different $\lambda_t$, comparing the results of our proximal sampling with those obtained through random sampling.
As illustrated in \cref{fig:fid_lambda}, our method consistently outperforms random sampling across different parameter settings. Notably, the performance of the random sampling method exhibits fluctuations with changes in parameters. In contrast, our method demonstrates stability over a range of parameter configurations, thereby validating the assertion.

\subsection{Computational Resource}\label{sec:Compute_time}
To examine the impact of the variable $n$ on computational efficiency, we conducted an evaluation that consisted of measuring memory consumption and computational time required for the SR$\times 4$ task.
This evaluation was performed on a single NVIDIA RTX 3090 GPU, and the results are presented in Table \ref{tab:time_memory}.
Notably, when the value of $n$ is relatively small ($\leq$20), our method shows a substantial enhancement in the restoration metric as $n$ increases while incurring minimal increases in memory consumption and computation time. 
Additionally, the carefully designed adaptive sampling frequency strategy further enhances restoration efficiency within less computation time.
These findings highlight the effectiveness and validity of our approach.

\begin{table}[t] 
\caption{Ablation studies on the number of candidate samples on FFHQ SR$\times$ 4.}  
        \centering
        \resizebox{1.\linewidth}{!}{%
        \begin{tabular}{lcccccc}
            \toprule        
            {} & \multicolumn{3}{c}{FFHQ}& \multicolumn{3}{c}{ImageNet} \\
            \cmidrule(lr){2-4}
            \cmidrule(lr){5-7}
            {Method}  & {PSNR$\uparrow$} & {LPIPS$\downarrow$} &{FID$\downarrow$}  &{PSNR$\uparrow$}& {LPIPS$\downarrow$} & {FID$\downarrow$}  \\
            \midrule
            $n$=1 & 26.44 & 0.235 & 34.91 & 23.95 & 0.336 & 43.54 \\
		$n$=2 & 26.64 & 0.221 & 31.19 & 24.42& 0.328 & 40.63  \\
		$n$=10 & 26.77 & 0.207 & 27.60 & \underline{24.64}& 0.282 & 34.05  \\   
		 $n$=20 & \underline{26.93}  & \underline{0.203} & \underline{26.30} & \textbf{24.79} & \underline{0.276} & \underline{33.75}  \\       
         \rowcolor{lightblue}
		adaptive $n$ & \textbf{26.94}  & \textbf{0.201} & \textbf{25.98} &  24.44 & \textbf{0.267} & \textbf{30.70} \\
            \bottomrule
        \end{tabular}
        }
        \label{tab:ablation_MC_times}
        \vspace{-0.2cm}
\end{table}

\subsection{Ablation Study}

\subsubsection{The Number of Candidate Samples.} \label{para:MC_time}
The number of samples, denoted as $n$, stands out as a pivotal parameter of the proposed methodology. The ablation studies relating to $n$ are shown in \cref{tab:ablation_MC_times} and \cref{fig:cover}.
\cref{fig:cover} illustrates a noticeable enhancement in output quality and details with increasing $n$.
This observation is reinforced by \cref{tab:ablation_MC_times}, where both the LPIPS and FID metrics exhibit a notable decrease as $n$ increases.
The augmented number of candidate samples corresponds to a higher likelihood of proximity to the intended target, thereby providing enhanced supervision for the generative process. 
The enhancement of the proposed adaptive sampling frequency further 
enhance the efficiency and applicability of our proposed method.

It is noteworthy that the improvement becomes relatively marginal when $n$ exceeds 20 and may even lead to a decrease in PSNR. There are several reasons for this phenomenon. (i) The approximation is only conducted within the range space, neglecting the null space. This can lead to a certain degree of misalignment with the reference image. (ii) Since we utilized $\y \approx \Ac \x_0$ in our algorithm, the presence of measurement noise ($\mathbf{n}$ in Eq.~\eqref{eq:ax+n}) affects the accuracy of our sample selection. (iii) It can also be attributed to the well-known trade-off between perception and distortion metrics~\cite{blau2018perception}. Consequently, a larger value of $n$ does not necessarily lead to better results.

\subsubsection{Noise Level.}
The precision of the approximation is impacted by the level of measurement noise $\mathbf{n}$. 
We explore the effects of different noise levels on the proposed method compared with random sampling.
The results presented in \cref{tab:ablation_of_noise_level} reveal that (i) both methods experience a decline in performance as the noise level increases, and our approach consistently outperforms random sampling in all settings. (ii) For inpainting, both methods experience considerable degradations as the noise level increases, demonstrating sensitivity to the noise levels.  (iii) For SR, notable performance degradation is observed in our method as the noise level increases, while the method adopting random sampling demonstrates greater robustness in this regard.

\begin{table}[t] 
\caption{Comparison of the impact of different noise levels.}
        \centering        
        \resizebox{1.\linewidth}{!}{%
        \begin{tabular}{cccccccc}
            \toprule
             & & \multicolumn{3}{c}{Inpainting} & \multicolumn{3}{c}{SR$\times$4} \\
            \cmidrule(lr){3-5}
            \cmidrule(lr){6-8} 
            $\sigma_y$ & Method & {PSNR$\uparrow$} & {LPIPS$\downarrow$} &{FID$\downarrow$}  &{PSNR$\uparrow$}& {LPIPS$\downarrow$} & {FID$\downarrow$} \\ 
            \midrule
             & Random & \textbf{27.69} & 0.173& 21.11 & 26.61&0.235&34.55 \\
		  \multirow{-2}{*}{0.0}&   \cellcolor{lightblue}{Ours} & \cellcolor{lightblue}27.61 & \cellcolor{lightblue}\textbf{0.156}& \cellcolor{lightblue}\textbf{17.50} & \cellcolor{lightblue}\textbf{27.42} & \cellcolor{lightblue}\textbf{0.187} & \cellcolor{lightblue}\textbf{24.08}  \\
		\midrule
		 & Random & 27.29 & 0.182& 23.11 & 26.55&0.237&34.50 \\
		 \multirow{-2}{*}{0.01} & \cellcolor{lightblue}{Ours} & \cellcolor{lightblue}\textbf{27.50} & \cellcolor{lightblue}\textbf{0.161} & \cellcolor{lightblue}\textbf{18.65} & \cellcolor{lightblue}\textbf{26.94} & \cellcolor{lightblue}\textbf{0.201} & \cellcolor{lightblue}\textbf{25.98} \\
		\midrule
		 & Random & \textbf{26.68} & 0.228& 30.63 & 25.89 & 0.257 & 34.59 \\
		\multirow{-2}{*}{0.05} &  \cellcolor{lightblue}{Ours} & \cellcolor{lightblue}26.51 & \cellcolor{lightblue}\textbf{0.208} & \cellcolor{lightblue}\textbf{24.62} & \cellcolor{lightblue}\textbf{26.07}& \cellcolor{lightblue}\textbf{0.245}& \cellcolor{lightblue}\textbf{30.67} \\
            \bottomrule
        \end{tabular}
        }
\label{tab:ablation_of_noise_level}
\vspace{-0.2cm}
\end{table}

\section{Conclusion}
In this paper, we introduce a novel approach to mitigate problems caused by misaligned random sampling in diffusion-based image restoration methods. 
Specifically, our approach advocates selecting the proximal sample that is more consistent with the observed measurement in the predicted distribution. An adaptive sampling frequency strategy is followed to optimize the computational efficiency of the proposed method.
In addition, we propose a realignment of the inference initialization involving measurement information to better align with expected generations. Experimental results validate a substantial performance improvement compared to SOTA methods.
Our method innovatively takes advantage of stochastic sampling in the diffusion generative process, exploiting the sampling selection to enhance the generation quality and offering new insights for future diffusion inference algorithms.

\clearpage

\begin{acks}
This work was supported by the Fundamental Research Funds for the Central Universities (No.~1082204112364), the National Nature Science Foundation of China (No.~62106161), and the Sichuan University Luzhou Municipal Government Strategic Cooperation Project (No.~2022CDLZ-8). We thank the anonymous reviewers for their insightful comments and feedback, and extend our gratitude to Mengxi Xie from SJTU and Linrui Dai from CQU for their constructive suggestions on the figures. 
\end{acks}

\bibliographystyle{ACM-Reference-Format}
\balance
\bibliography{bib}


\clearpage

\appendix


\section{Score-based Diffusion Models}

Here, we review the continues form of diffusion model introduced for completeness. 
The forward process of diffusion can be formulated by an It\^{o} SDE:
\begin{equation}\label{eq:forward_sde}
    \mathrm{d}\mathbf{x} = \mathbf{f}(\mathbf{x}, t) \mathrm{d}t + g(t) \mathrm{d} \mathbf{w},
\end{equation}
where $\mathbf{f}(\cdot,t): \Rd^d \mapsto \Rd^d$ is a drift coefficient function, ${g}(t) \in \Rd$ is a scalar function known as the diffusion coefficient, and $\mathbf{w} \in \Rd^d$ is the standard Wiener process. 
The forward process of DDPM can be viewed as variance-preserving (VP) SDE~\cite{song2021scorebased} as the total variance is preserved.

Correspondingly, the reversed generative (i.e., denoising) process is given by the reverse-time SDE:
\begin{equation}\label{eq:revers_sde}
\mathrm{d}\mathbf{x} = [\mathbf{f}(\mathbf{x}, t) - g^2(t) \nabla_{\mathbf{x}_t} \log p_t(\mathbf{x}_t)]\mathrm{d}t + g(t) \mathrm{d} \mathbf{\Bar{w}},
\end{equation}
where $\mathrm{d} \mathbf{\Bar{w}}$ denotes the standard Wiener process running backward in time and $p_t(\mathbf{x}_t)$ denotes the marginal probability density w.r.t. $\mathbf{x}$ at time $t$.
In practice, a time-dependent network $\mathbf{s}_\thetab(\x_t,t)$ parameterized by $\thetab$ is trained to approximate the (Stein) score function $\nabla_{\mathbf{x}_t}\log p_t(\mathbf{x}_t)$ with score-matching method \cite{vincent2011connection}:
\begin{equation}\label{eq:objective}
\begin{aligned}
   \min_\thetab
  \mathbb{E}_{t,\mathbf{x}_0, \mathbf{x}_t}
   \big[\|\mathbf{s}_\thetab(\mathbf{x}_t, t) - \nabla_{\mathbf{x}_t}\log p(\mathbf{x}_t | \mathbf{x}_0)\|_2^2 \big],
\end{aligned}
\end{equation}
where $t$ is uniformly sampled from $[0,T]$, $\mathbf{x}_0 \sim q_{data}(\mathbf x)$ and $\mathbf{x}_t \sim q(\mathbf{x}_t | \mathbf{x}_0)$.
Once we have aceess to the well-trained $\mathbf{s}_{\thetab}(\mathbf{x}_t, t)$, an clean sample can be derived by simulating the generative reverse-time SDE~\eqref{eq:revers_sde} using numerical solvers (e.g. Euler-Maruyama).

\section{Additional Details} \label{sec:app_proof}
\subsection{Derivation for   \texorpdfstring{$\x_{t-1}^*$}.  }
We show the detailed derivation and explanation for the $\x_{t-1}^*$ in \cref{eq:findz1}. 
Our motivation was to give the stochastic sampling process a supervision at  inference, trying to make the sampled $\x_{t-1}$ in the vicinity of the theoretically derived solution.
Recall the inference distributions defined in \cite{song2020denoising}:
\begin{equation}
\begin{aligned}
    & q(\x_{t-1} | \x_t, \x_0)  \\
    = {}& {\mathcal N} (\sqrt{\bar{\alpha}_{t-1}} \x_0 + \sqrt{1-\bar{\alpha}_{t-1}-\sigma_t^2} \cdot \frac{\x_t-\sqrt{\bar{\alpha}_t} \x_0}{\sqrt{1-\bar{\alpha}_t}}, \sigma_t^2 \Ib).
\end{aligned}
\end{equation}
By setting $\sigma_t=0$, we get our deterministic denoised estimate $\x_{t-1}^*$
\begin{equation}
\begin{aligned}
    \x_{t-1}^*={} &\sqrt{\Bar{\alpha}_{t-1}} {\x}_{0}+\sqrt{1-\Bar{\alpha}_{t-1}} \cdot \frac{\x_t-\sqrt{\Bar{\alpha}_{t}}\x_0}{\sqrt{1-\Bar{\alpha}_{t}}}\\
    ={} & \frac{\sqrt{1-\Bar{\alpha}_{t-1}}}{\sqrt{1-\Bar{\alpha}_{t}}} \x_t +(\sqrt{\Bar{\alpha}_{t-1}} - \frac{\sqrt{\Bar{\alpha}_{t}} \cdot \sqrt{1-\Bar{\alpha}_{t-1}}}{\sqrt{1-\Bar{\alpha}_{t}}}) \x_0.
\end{aligned}
\end{equation}
Then  we have the approximation for $\Ac \x_{t-1}^*$:
\begin{equation}
  \Ac \x_{t-1}^* = \frac{\sqrt{1-\Bar{\alpha}_{t-1}}}{\sqrt{1-\Bar{\alpha}_{t}}} \Ac \x_t +
(\sqrt{\Bar{\alpha}_{t-1}} - \frac{\sqrt{\Bar{\alpha}_{t}} \cdot \sqrt{1-\Bar{\alpha}_{t-1}}}{\sqrt{1-\Bar{\alpha}_{t}}}) \Ac \x_0.
\end{equation}
By applying $\y = \Ac \x_0+\mathbf{n}$,
\begin{equation}
\begin{aligned}
    & \Ac \x_{t-1}^*  \\
    ={}& \frac{\sqrt{1-\Bar{\alpha}_{t-1}}}{\sqrt{1-\Bar{\alpha}_{t}}} \Ac \x_t +
(\sqrt{\Bar{\alpha}_{t-1}} - \frac{\sqrt{\Bar{\alpha}_{t}} \cdot \sqrt{1-\Bar{\alpha}_{t-1}}}{\sqrt{1-\Bar{\alpha}_{t}}}) (\y-\mathbf{n})\\
    \approx{} & \frac{\sqrt{1-\Bar{\alpha}_{t-1}}}{\sqrt{1-\Bar{\alpha}_{t}}} \Ac \x_t +
(\sqrt{\Bar{\alpha}_{t-1}} - \frac{\sqrt{\Bar{\alpha}_{t}} \cdot \sqrt{1-\Bar{\alpha}_{t-1}}}{\sqrt{1-\Bar{\alpha}_{t}}}) \y
\end{aligned}
\end{equation}

The approximation of last step holds when $\sigma_y$ is assumed in a small range as $\mathbf{n} \sim \mathcal{N}(\mathbf{0},\mathbf{\sigma_y}\Ib)$.

\begin{proposition}
    Assume that $\|\y\|_2 \leq Y, \|\x_0\|_2 \leq X, \|\x_{t-1}'\|_2 \leq X, \|\mub_\theta(\x_t, t, \y)\|_2 \leq E$ are bounded and continuous, there exists a upper bound for $\|\Ac\left(\x_{t-1}^i-\x^*_{t-1}\right)\|_2^2$ which is:
    \begin{equation}
        \|\Ac\left(\x_{t-1}^i-\x^*_{t-1}\right)\|_2^2  \leq \left(\Ac E - \Ac X + \Ac \sigma_t \right)^2.
    \end{equation}
    It is proven that some linear (i.e.~super-resolution and inpainting) operators $\Ac$ can be modeled as matrices~\cite{chung2023diffusion} which are bounded and linear. And for $\sigma_t$, it is usually set to a small value which is also bounded. Thus, we can conclude that the above upper bound holds.
\end{proposition}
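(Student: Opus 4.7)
The plan is to exploit the linearity of $\Ac$ together with the triangle inequality on the Euclidean norm and the scalar bounds assumed in the hypothesis. Concretely, I would first substitute the explicit parametric form $\x_{t-1}^i = \mub_\theta(\x_t, t, \y) + \sigma_t \z_t^i$ into $\Ac(\x_{t-1}^i - \x_{t-1}^*)$ and expand by linearity to obtain
\begin{equation*}
\Ac(\x_{t-1}^i - \x_{t-1}^*) = \Ac \mub_\theta(\x_t, t, \y) + \sigma_t \Ac \z_t^i - \Ac \x_{t-1}^*.
\end{equation*}
This isolates the three controllable quantities and reduces the problem to a single application of the triangle inequality.

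Next I would pull the operator norm out of each term via $\|\Ac \mathbf{v}\|_2 \leq \|\Ac\|_{\mathrm{op}} \|\mathbf{v}\|_2$ and then invoke the three hypothesized bounds $\|\mub_\theta(\x_t, t, \y)\|_2 \leq E$, $\|\x_{t-1}^*\|_2 \leq X$, and an implicit deterministic bound on $\|\z_t^i\|_2$. The last of these requires care, since $\z_t^i \sim \mathcal{N}(\mathbf{0},\Ib)$ is unbounded almost surely; my reading is that the authors absorb $\sigma_t \|\z_t^i\|_2$ into the single scalar $\sigma_t$, which would have to be justified either by truncating the Gaussian on a high-probability event or by a concentration argument of the form $\|\z_t^i\|_2 \lesssim 1$ with overwhelming probability. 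Squaring the resulting bound then yields something of qualitative shape $\bigl(\|\Ac\|_{\mathrm{op}}(E + X + \sigma_t)\bigr)^2$, with the informal symbols $\Ac E$, $\Ac X$, $\Ac \sigma_t$ in the statement reinterpreted as $\|\Ac\|_{\mathrm{op}}\cdot E$, $\|\Ac\|_{\mathrm{op}}\cdot X$, $\|\Ac\|_{\mathrm{op}}\cdot \sigma_t$, which is admissible since the final sentence of the statement invokes boundedness of $\Ac$ for super-resolution and inpainting kernels.

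The principal obstacle will be reconciling this natural derivation with the minus sign appearing in the claimed right-hand side $(\Ac E - \Ac X + \Ac \sigma_t)^2$. A direct triangle-inequality argument is strictly additive, not subtractive, because $\mub_\theta(\x_t,t,\y)$ and $\x_{t-1}^*$ can in principle point in arbitrary directions; the reverse triangle inequality $\|\mathbf{u} - \mathbf{v}\|_2 \geq \bigl|\|\mathbf{u}\|_2 - \|\mathbf{v}\|_2\bigr|$ yields a lower bound and hence cannot drive an upper bound of the stated form. My best interpretation is therefore that the minus sign is a typographical slip and the intended statement is the additive bound $\bigl(\|\Ac\|_{\mathrm{op}}(E + X + \sigma_t)\bigr)^2$; an alternative, more charitable reading is that an unstated geometric alignment places $\mub_\theta(\x_t,t,\y)$ and $\x_{t-1}^*$ on roughly the same ray, in which case one could recover a difference inside the parentheses. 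In either reading the substantive content is the two-line triangle-inequality argument above, so the remaining work is mostly cosmetic: fixing notation and nailing down the treatment of the random component $\sigma_t \z_t^i$.
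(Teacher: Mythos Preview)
Your proposal is correct in substance and correctly diagnoses the minus sign as a slip; the paper's own proof also concludes with an all-additive bound (in fact $(\Ac E + \Ac X + 2\sqrt{A}Y + \Ac \sigma_t)^2$, not the form printed in the proposition). The main difference is in how $\Ac\x_{t-1}^*$ is handled. You bound it directly via the hypothesis $\|\x_{t-1}^*\|_2 \leq X$ and the operator norm of $\Ac$. The paper instead invokes the measurement-space approximation $\Ac\x_{t-1}^* \approx C_1\Ac\x_t + C_2\y$ (their Eq.~\eqref{eq:Axtappr}), regroups the expression into the two blocks $(\Ac\mub_\theta - \sqrt{\bar\alpha_{t-1}}\y)$ and $(\sigma_t\Ac\z_t^i - \sqrt{1-\bar\alpha_{t-1}}\,(\Ac\x_t - \sqrt{\bar\alpha_t}\y)/\sqrt{1-\bar\alpha_t})$, applies the triangle inequality to each, and only then inserts the scalar bounds; this is why their final expression picks up the extra $2\sqrt{A}Y$ term coming from $\|\y\|_2 \leq Y$. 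Your route is shorter and actually lands closer to the bound as stated, while the paper's route ties the argument back to the approximation that drives the algorithm. Both treat the Gaussian $\z_t^i$ with the same informality you flagged, simply absorbing $\sigma_t\|\Ac\z_t^i\|_2$ into the symbol $\Ac\sigma_t$ without a concentration step.
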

\begin{proof}
    \begin{align}
            & \|\Ac\left(\x_{t-1}^i- \x^*_{t-1} \right) \|_2^2 \notag\\
        ={} & \|\Ac\x_{t-1}^i-\Ac\x^*_{t-1}\|_2^2 \\
        \overset{(\text{a})}{\approx}  & \|\Ac\left(\mub_\theta(\x_t, t, \y) + \sigma_t \z^i_t \right) -  \notag \\
        & \left(\sqrt{\bar{\alpha}_{t-1}} \y + \sqrt{1-\bar{\alpha}_{t-1}} \cdot \frac{\Ac\x_t - \sqrt{\bar{\alpha}_t}\y}{\sqrt{1 - \bar{\alpha}_t}}\right)\|_2^2,
    \end{align}
    where $\z^i_t \sim \mathcal{N}(\mathbf{0},\Ib)$ and the $\overset{(\text{a})}{\approx}$ is from Eq.~\eqref{eq:Axtappr}. Here, we apply trigonometric inequalities to the above equation, we have:
    \begin{align}
        &\|\Ac\left(\x_{t-1}^i-\x^*_{t-1}\right) \|_2^2  \notag\\
        \overset{(\text{a})}{\approx} & \|\Ac\left(\mub_\theta(\x_t, t, \y) + \sigma_t \z^i_t \right)- \notag\\& \left(\sqrt{\bar{\alpha}_{t-1}} \y + \sqrt{1-\bar{\alpha}_{t-1}} \cdot \frac{\Ac\x_t - \sqrt{\bar{\alpha}_t}\y}{\sqrt{1 - \bar{\alpha}_t}}\right)\|_2^2 \\
        \approx {}& \|\left(\Ac \mub_\theta(\x_t, t, \y) - \sqrt{\bar{\alpha}_{t-1}}\y \right) +  \notag\\& \left(\sigma_t \Ac \z_t^i - \sqrt{1-\bar{\alpha}_{t-1}} \cdot \frac{\Ac\x_t - \sqrt{\bar{\alpha}_t}\y}{\sqrt{1 - \bar{\alpha}_t}} \right)\|_2^2 \\
        \approx {}& \left(\|\Ac \mub_\theta(\x_t, t, \y) - \sqrt{\bar{\alpha}_{t-1}}\y\|_2 +  \notag \right.\\&\left. \|\sigma_t \Ac \z_t^i - \sqrt{1-\bar{\alpha}_{t-1}} \cdot \frac{\Ac\x_t - \sqrt{\bar{\alpha}_t}\y}{\sqrt{1 - \bar{\alpha}_t}}\|_2 \right)^2.
    \end{align}
    Since $\|\mub_\theta(\x_t, t, \y)\|_2 \leq E$, $\|\bar{\alpha}_t\|_2 \leq A$, $\|\bar{\alpha}_t\|_2 \leq A$, $\|\bar{\alpha}_{t-1}\|_2 \leq A,$, and $\|\y\|_2 \leq Y$, we have the upper bound for $\|\Ac \mub_\theta(\x_t, t, \y) - \sqrt{\bar{\alpha}_{t-1}}\y\|_2$, and thus $\|\Ac \mub_\theta(\x_t, t, \y) - \sqrt{\bar{\alpha}_{t-1}}\y\|_2 \leq \Ac E + \sqrt{A}Y$. Similarly, we have the upper bound for $\|\sqrt{1-\bar{\alpha}_{t-1}} \cdot \frac{\Ac\x_t - \sqrt{\bar{\alpha}_t}\y}{\sqrt{1 - \bar{\alpha}_t}}\|_2$, which is $\|\sqrt{1-\bar{\alpha}_{t-1}} \cdot \frac{\Ac\x_t - \sqrt{\bar{\alpha}_t}\y}{\sqrt{1 - \bar{\alpha}_t}}\|_2 = \|\frac{\sqrt{1-\bar{\alpha}_{t-1}} \Ac \x_t}{\sqrt{1-\bar{\alpha}_{t}}} - \frac{\sqrt{\bar{\alpha}_t(1-\bar{\alpha}_{t-1})}y}{\sqrt{1-\bar{\alpha}_{t}}}\| \leq \Ac X + \sqrt{A}Y$. Therefore, we have:
    \begin{align}
        & \|\Ac\left(\x_{t-1}^i-\x^*_{t-1}\right)\|_2^2    \notag\\
        \overset{(\text{a})}{\approx} & \|\Ac\left(\mub_\theta(\x_t, t, \y) + \sigma_t \z^i_t \right)- \notag\\& \left(\sqrt{\bar{\alpha}_{t-1}} \y + \sqrt{1-\bar{\alpha}_{t-1}} \cdot \frac{\Ac\x_t - \sqrt{\bar{\alpha}_t}\y}{\sqrt{1 - \bar{\alpha}_t}}\right)\|_2^2  \\
        \leq{} & \left(\Ac E + \Ac X + 2\sqrt{A}Y + \Ac \sigma_t \right)^2.
    \end{align}
\citet{chung2023diffusion} has shown that some linear operations $\Ac$ (such as super-resolution and inpainting) can be represented as matrices that are linear and bounded. Moreover, $\sigma_t$ is typically chosen to be a small and finite number. Therefore, we can affirm that the upper bound in the previous equation is valid.
\end{proof}

\begin{proposition}
    For the random variable $\z \sim \mathcal{N}(\mathbf{0},\Ib)$ and its objective function:
    \begin{equation}
        f(\z) = \|\Ac(\mub_{\thetab}(\x_t,t,\y)+\sigma_t \z - C_1 \x_t ) -C_2\y \|^2_2,
    \end{equation}
    where $C_1 = {\sqrt{1-\Bar{\alpha}_{t-1}}}/ {\sqrt{1-\Bar{\alpha}_{t}}}$ and $C_2=\sqrt{\Bar{\alpha}_{t-1}} - {\sqrt{\Bar{\alpha}_{t}} \sqrt{1-\Bar{\alpha}_{t-1}}}/ \\ {\sqrt{1-\Bar{\alpha}_{t}}}$.
    Thus, with $M$ trials, each consisting of $N$ samples, we have the variance for $f(\z)$, which is $\text{Var}\left(f(\z)\right)$. We have
    \begin{equation}
        \text{Var}_\text{DPS}\left(f(\z)\right) > \text{Var}_\text{MC}\left(f(\z)\right) > \text{Var}_\text{Ours}\left(f(\z)\right),
    \end{equation}
    here, $\text{Var}_\text{DPS}\left(f(\z)\right)$ is the variance of DPS~\cite{chung2023diffusion}, $\text{Var}_\text{MC}\left(f(\z)\right)$ is the variance of Monte Carlo sampling, and $\text{Var}_\text{Ours}\left(f(\z)\right)$ is the variance of our proximal sampling method. 
\end{proposition}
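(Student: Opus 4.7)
My plan is to recast each of the three sampling schemes as an estimator built from $N$ i.i.d.\ copies $\z^1,\dots,\z^N \sim \mathcal N(\mathbf 0,\Ib)$ and the associated i.i.d.\ random variables $F_i := f(\z^i)$. Specifically, DPS corresponds to returning a single draw $F_1$, Monte Carlo corresponds to the sample mean $\bar F_N := \frac{1}{N}\sum_{i=1}^N F_i$, and our proximal rule corresponds to the minimum order statistic $F_{(1)} := \min_{i} F_i$. The $M$ trials are then used to empirically estimate the variance of these three estimators; it suffices to compare the population variances.

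For the first inequality $\mathrm{Var}_{\text{DPS}}(f(\z))>\mathrm{Var}_{\text{MC}}(f(\z))$, I would just invoke the classical i.i.d.\ identity
\begin{equation}
\mathrm{Var}(\bar F_N) \;=\; \tfrac{1}{N}\,\mathrm{Var}(F_1) \;<\; \mathrm{Var}(F_1),
\end{equation}
valid whenever $N\geq 2$ and $\mathrm{Var}(F_1)>0$. The nondegeneracy of $F_1$ follows from the fact that $f(\z)$ is a nontrivial quadratic form in the Gaussian vector $\z$ (it has a noncentral chi-square type law because $f(\z)=\|\sigma_t\Ac\z + \mathbf b\|_2^2$ with $\mathbf b := \Ac(\mub_\thetab(\x_t,t,\y)-C_1\x_t)-C_2\y$), so its variance is strictly positive as long as $\sigma_t\Ac\neq 0$.

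The second inequality $\mathrm{Var}_{\text{MC}}(f(\z))>\mathrm{Var}_{\text{Ours}}(f(\z))$ is the substantive step. My plan is to use order-statistics theory together with the specific tail structure of the noncentral chi-square form of $F_i$. The idea is to exploit the fact that $F_i\ge 0$ is bounded below and has a continuous, strictly positive density near its essential infimum $f_\star$; then by standard extreme-value results the minimum $F_{(1)}$ concentrates at rate $1/N^2$ around $f_\star$, so $\mathrm{Var}(F_{(1)})=O(1/N^2)$, whereas the mean satisfies $\mathrm{Var}(\bar F_N)=\Theta(1/N)$. I would therefore argue that for all sufficiently large $N$ (certainly for the regime of $N$ used in practice), $\mathrm{Var}(F_{(1)})<\mathrm{Var}(\bar F_N)$. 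An alternative, cleaner route is to condition on the event that $F_{(1)}$ is selected: since this event restricts us to the lower tail of the distribution of $F_1$, the conditional dispersion shrinks, giving a direct bound $\mathrm{Var}(F_{(1)})\le \mathrm{Var}(F_1\mid F_1\le t^\star)$ for an appropriate threshold $t^\star$, which can then be compared to $\mathrm{Var}(\bar F_N)=\mathrm{Var}(F_1)/N$.

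The main obstacle will be making the second comparison rigorous for \emph{every} $N\ge 2$ rather than only asymptotically, since the inequality $\mathrm{Var}(\min)<\mathrm{Var}(\mathrm{mean})$ fails for some distributions at small $N$ (e.g.\ $N=2$ for uniform laws). I expect to handle this either by invoking the noncentral chi-square density shape near its minimum (which gives the $1/N^2$ concentration) or by restricting the claim to a regime of $N$ where the estimate is tight, as is implicit in the paper's experimental setup. I would conclude by chaining the two inequalities to obtain the stated ordering.
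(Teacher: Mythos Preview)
Your setup---identifying DPS with a single draw $F_1$, Monte Carlo with the sample mean $\bar F_N$, and the proximal rule with the minimum order statistic $F_{(1)}$---is exactly the framing the paper uses, and your argument for $\mathrm{Var}(\bar F_N)=\tfrac{1}{N}\mathrm{Var}(F_1)<\mathrm{Var}(F_1)$ coincides with theirs (they cite the CLT, you use the i.i.d.\ identity directly; same content).

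Where you diverge is on $\mathrm{Var}_{\text{MC}}>\mathrm{Var}_{\text{Ours}}$. The paper does not use order statistics or the noncentral chi-square tail; it argues in one sentence that because the proximal method only retains the smallest value from each batch, ``the set of function values selected by our method is a subset of those sampled by Monte Carlo sampling,'' and declares the variance ``obviously'' smaller. That is the entirety of their justification. Your route via extreme-value concentration ($\mathrm{Var}(F_{(1)})=O(1/N^2)$ versus $\Theta(1/N)$) is more principled and actually engages with the distributional structure of $f(\z)$; it also honestly flags that the inequality can fail for small $N$ under some laws, a caveat the paper does not address. In short, your approach is more rigorous but only asymptotic, whereas the paper's is brief and heuristic; both ultimately leave the strict inequality for every $N\ge 2$ unproved, but you are the only one who acknowledges this.
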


\begin{proof}
    For each trial, we only take one sample $\z_1$ from the distribution, and compute $f(\z_1)$. Thus, for DPS, the estimation of a single sample is $\hat{\mu}_\text{single} = f(\z_1)$, and the variance of the estimation is $\text{Var}\left(f(\z)\right)$.
    
    For Monte Carlo sampling, we draw $N$ independently and identically distributed samples $\z_1,\z_2,\dots,\z_N$, and compute the function $f(\z_i)$ for each sample. Finally,  we can get the estimation of a single trial as:
    \begin{equation}
        \hat{\mu}_\text{MC} = \frac{1}{N} \sum^{N}_{i=1} f(\z_i).
    \end{equation}
    By the central limit theorem, when $N$ is large enough, the variance of $\hat{\mu}_\text{MC}$ is $\frac{\text{Var}\left(f(\z)\right)}{N}$.
    
    For our proximal sampling method, similar to Monte Carlo sampling, we also draw $N$ independently and identically distributed samples $\z_1,\z_2,\dots,\z_N$ for each trial, and compute the function $f(\z_i)$ for each sample. However, we only get the estimation from the sample with the lowest objective function value:
    \begin{equation}
        \hat{\mu}_\text{Ours} = \argmin_\z {f(\z_1, \z_2, \dots, \z_N)}.
    \end{equation}
    In contrast to Monte Carlo sampling, which averages all function values, our proximal sampling selects only the samples corresponding to the smallest function values for each trial. Thus, the set of function value selected by our method is a subset of those sampled by Monte Carlo sampling. Obviously, for multiple trials, the variance of our method is smaller than Monte Carlo sampling, and also smaller than random sampling as in DPS. Therefore, we have:
    \begin{equation}
        \text{Var}_\text{DPS}\left(f(\z)\right) > \text{Var}_\text{MC}\left(f(\z)\right) = \frac{\text{Var}_\text{DPS}\left(f(\z)\right)}{N} > \text{Var}_\text{Ours}\left(f(\z)\right).
    \end{equation}
\end{proof}

\section{Additional Experimental Results}

\begin{figure*}[ht]
\centering
\includegraphics[width=.92\textwidth]{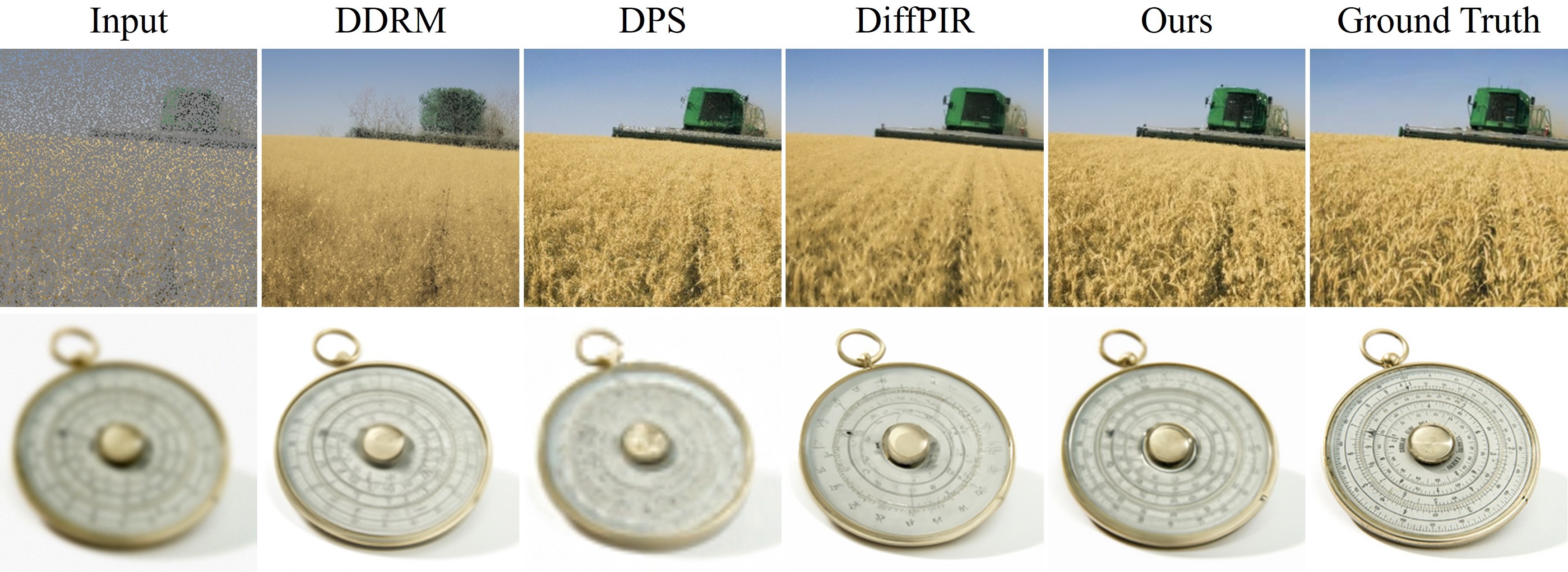}
\caption{More visual comparison of inpainting and deblurring on ImageNet.} \Description{More visual comparison of inpainting and deblurring on ImageNet.}
\label{fig:visual++}
\end{figure*}

\subsection{Less Error Accumulation}
As discussed in \cref{sec:result_selection}, sampling in proximity to the measurement yields less error accumulation. This is achieved by treating the injected noise as an adaptive correction. To verify this, \cref{fig:priordis} reports the Frobenius norm between true values $\sqrt{\bar{\alpha}_{t-1}}\x_0$ and the predicted values $\mub_\thetab(\x_t,t,\y)$.

The results validate that our proximal sampling exhibits superior predictive accuracy and less error accumulation compared to random sampling, consequently enhancing the precision in predicting subsequent samples.

\begin{figure}[htbp]
\centering
\subfigure{\includegraphics[width=0.49\textwidth]{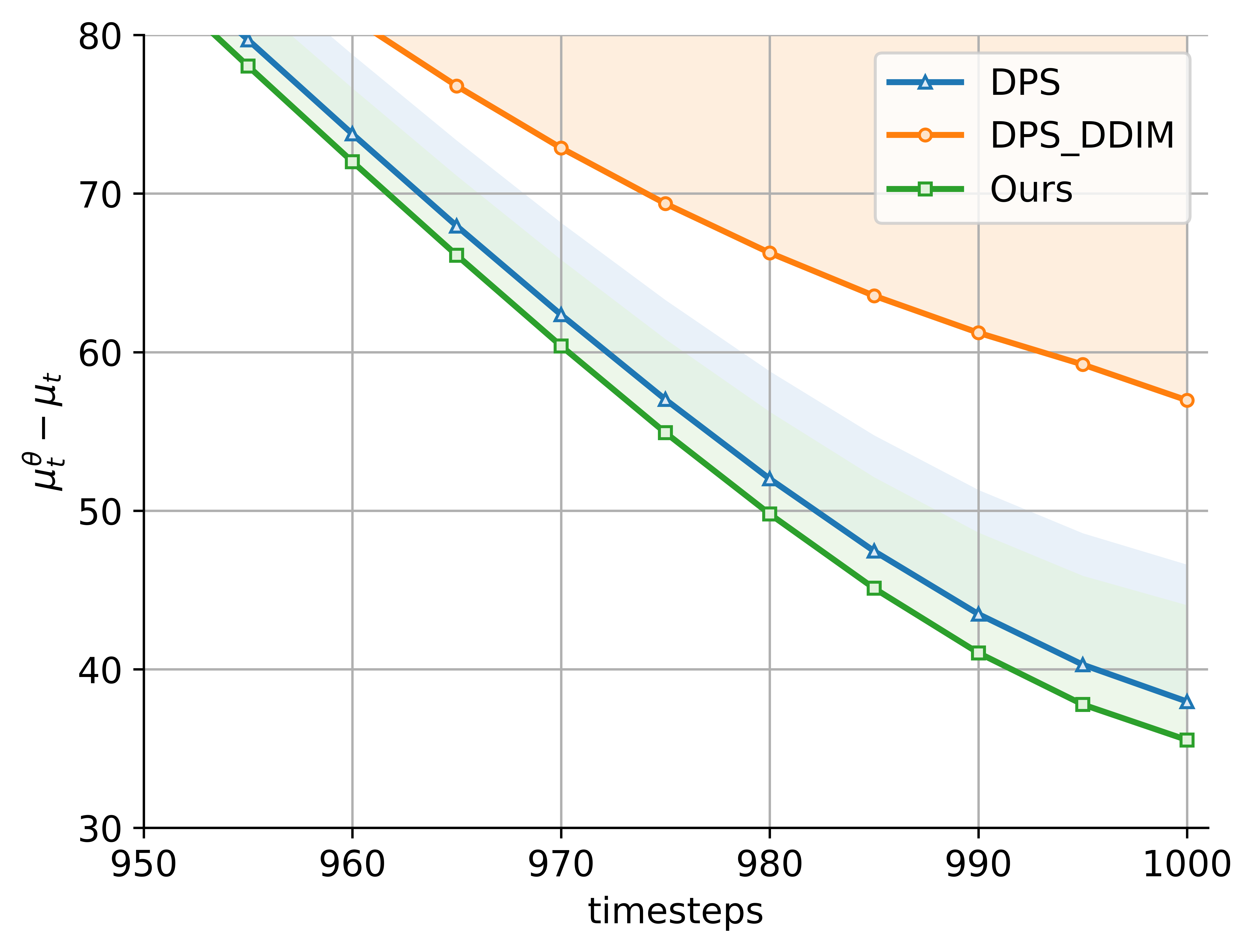}}
\caption{We report the average value of $\|\sqrt{\bar{\alpha}_{t-1}}\x_0-\mub_\thetab(\x_t,t,\y)\|_F$ on SR ($\times$4) task. The results show that our method achieves better predictive accuracy and reduces error accumulation. } \Description{We report the average value of $\|\sqrt{\bar{\alpha}_{t-1}}\x_0-\mub_\thetab(\x_t,t,\y)\|_F$ on SR ($\times$4) task. The results show that our method achieves better predictive accuracy and reduces error accumulation.}
\label{fig:priordis}
\end{figure}

\section{Experimental Details} \label{app:detail}
\subsection{Comparison Methods}
Since Score-SDE, DDRM, MCG, DPS, and DiffPIR are all pixel-based diffusion models, we used the same pre-trained checkpoint for fair comparison.

\textbf{PnP-ADMM:~} For PnP-ADMM we take the pre-trained model from DnCNN~\cite{zhang2017beyond} repository, and set $\tau=0.2$ and number of iterations to 10 for all inverse problem.

\textbf{Score-SDE:~} For Score-SDE, data consistency projection is conducted after unconditional diffusion denoising at each step. We adopt the same 
projection settings as suggested in \cite{chung2022improving}.

\textbf{MCG, DPS:~} The experimental results are derived from the source code implementation provided by \cite{chung2023diffusion} with the default parameter setting as suggested in the paper, i.e.  $  {\zeta_t} = 1/\|\y - \Ac(\hat\x_{0|t})\|$ for all inverse problem on FFHQ dataset, $ {\zeta_t} = 1/\|\y - \Ac(\hat\x_{0|t})\|$ for ImageNet SR and inpainting, $ {\zeta_t} = 0.4/\|\y - \Ac(\hat\x_{0|t})\|$ for ImageNet Gaussian deblur, and $ {\zeta_t} = 0.6/\|\y - \Ac(\hat\x_{0|t})\|$ for ImageNet motion deblur. The difference between these two methods is that MCG additionally applied 
data consistency steps as Euclidean projections onto the measurement set. 

\textbf{DDRM:~} We apply 20 NFEs DDIM~\cite{song2020denoising} sampling with $\eta = 0.85, \eta_B = 1.0$ for all experiment as suggested in the paper.

\textbf{LGD-MC:~} We follow the implementation of the algorithm in \cite{song2023loss} to use a Monte Carlo estimate of the gradient correction to amend the denoising process. The number of Monte Carlo samples is set to 20.

\textbf{DiffPIR:~} We use the original code and pre-trained models provided by \cite{zhu2023denoising}. We set the hyper-parameters consistent with the noiseless situation in the paper, i.e., SR: $\zeta = 0.3, \lambda = 6.0$ / motion deblur: $\zeta = 0.9, \lambda = 7.0$ / Gaussian deblur: $\zeta = 0.4, \lambda = 12.0$/ inpainting: $\zeta = 1.0 / \lambda = 7.0$.  Besides, we designate the sub\_1\_analytic as False in the motion deblurring task, since it directly leverages the pseudo-inverse of the fast Fourier transform, resulting in an unfair boost in performance~\cite{miljkovic2012application}.

\subsection{Parameter Setting}
Here, we list the hyper-parameter values for different tasks and
datasets in \cref{tab:Hyperparameters}. 
\begin{table}[htbp]
\centering
\caption{Hyper-parameter {$\lambda_t$} for each problem setting.}
\resizebox{.9\linewidth}{!}{%
\begin{tabular}{lcccc}
\toprule
Dataset & \multicolumn{1}{c}{{FFHQ 256x256}} & \multicolumn{1}{c}{{ImgaeNet 256x256}}  \\
\midrule
{Inpaint }  &  1.0 &  1.0    \\
{Deblur (Gaussian)}  &  1.0 &  0.5    \\
{Deblur (motion)}  &  1.0 &  0.3   \\
{SR ($\times 4$)}  &  1.0 &  1.0    \\
\bottomrule
\end{tabular}
}
\label{tab:Hyperparameters}
\end{table}

\subsection{Parameter Size and Speed}
The results for computational time and parameter sizes on the FFHQ model are presented in Table~\ref{tab:freq}. The parameter sizes of the diffusion-based methods remain consistent as the same model was used.
\begin{table}[ht] 
  \caption{Computational time and parameter size comparison}
  \label{tab:freq}
  \resizebox{\linewidth}{!}{
  \begin{tabular}{lccccccc}
    \toprule
     Method &PnP-ADMM &Score-SDE &MCG &DDRM &DPS &DiffPIR &Ours \\
    \midrule
    Parameter (M) & 0.56 & 93.56 &93.56 &93.56 &93.56 &93.56 &93.56\\
    \midrule
    Speed (s) & 1.71 &28.78 &56.41 & 4.79 &56.36 & 3.40 &57.25 \\
  \bottomrule
\end{tabular}}
\end{table}

\subsection{Source Code} Our implementation is now available at \url{https://github.com/74587887/DPPS_code}.

\section{Limitations and Future Work}

Our approach notably enhances perceptual metrics, yet it demonstrates a less substantial improvement in distortion metrics and, in certain tasks, experiences a slight decline. While this observation aligns with the perception-distortion trade-off phenomena as described in the literature~\cite{blau2018perception}, we acknowledge it as a noteworthy issue that warrants further investigation in our subsequent studies.
In addition, subsequent work of this paper aims to extend the application to diverse domains, including but not limited to medical image reconstruction.

\section{More Visual Results}
\label{sec:More_visual}
In this section, we provide supplementary visual results to show the effectiveness of our proposed method. \cref{fig:more_mc_1,fig:more_mc_2} indicate that our method produces images with better details and quality as $n$ increases.  \cref{fig:more_more_1} to \cref{fig:more_more_4} show the robustness of our method across different random seeds, in line with the claims made in the paper. \cref{fig:visual++} provides one more visual
comparison for inpainting and Gaussian deblurring tasks.

\begin{figure*}[ht]
    \centering
    \includegraphics[width=0.92\linewidth]{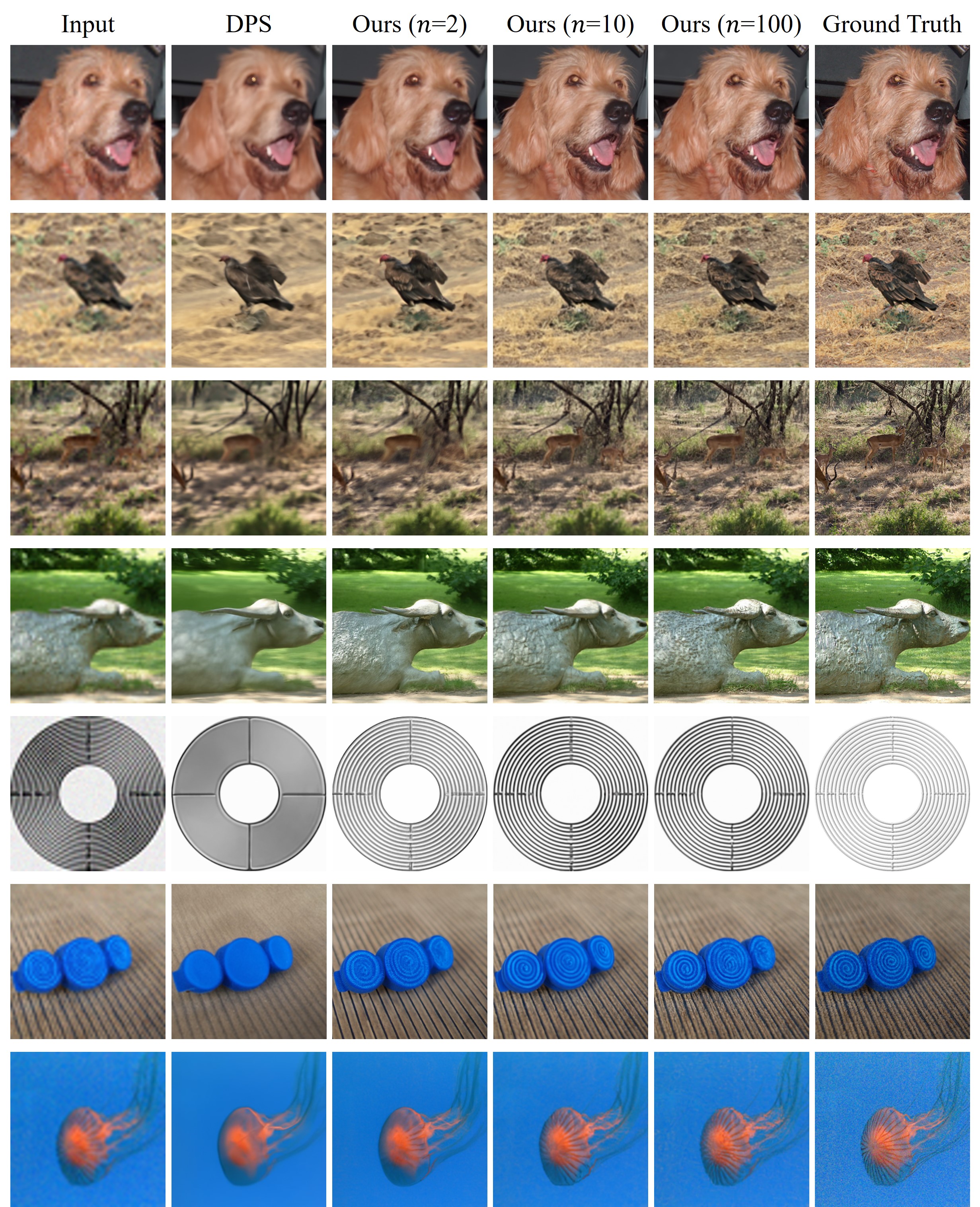}
    \caption{Qualitative results to illustrate the effectiveness of our proposed method and the impact of $n$ on SR ($\times$4) task with $\sigma_y=0.01$.}\label{fig:more_mc_1} \Description{}
\end{figure*}

\begin{figure*}[ht]
    \centering
    \includegraphics[width=0.92\linewidth]{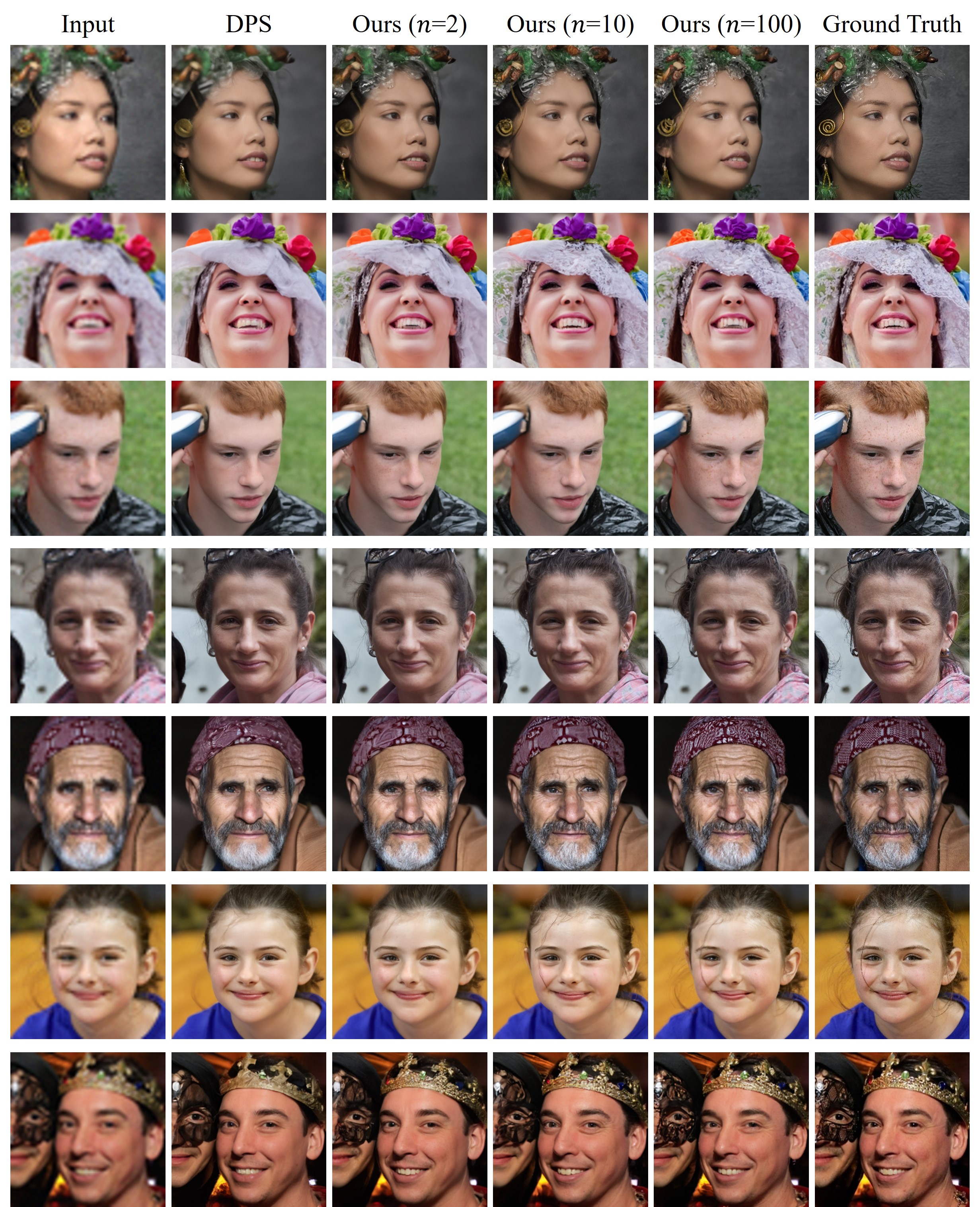}
    \caption{Qualitative results to illustrate the effectiveness of our proposed method and the impact of $n$ on SR ($\times$4) task with $\sigma_y=0.01$. }\label{fig:more_mc_2} \Description{}
\end{figure*}

\begin{figure*}[ht]
    \centering
    \includegraphics[width=0.92\linewidth]{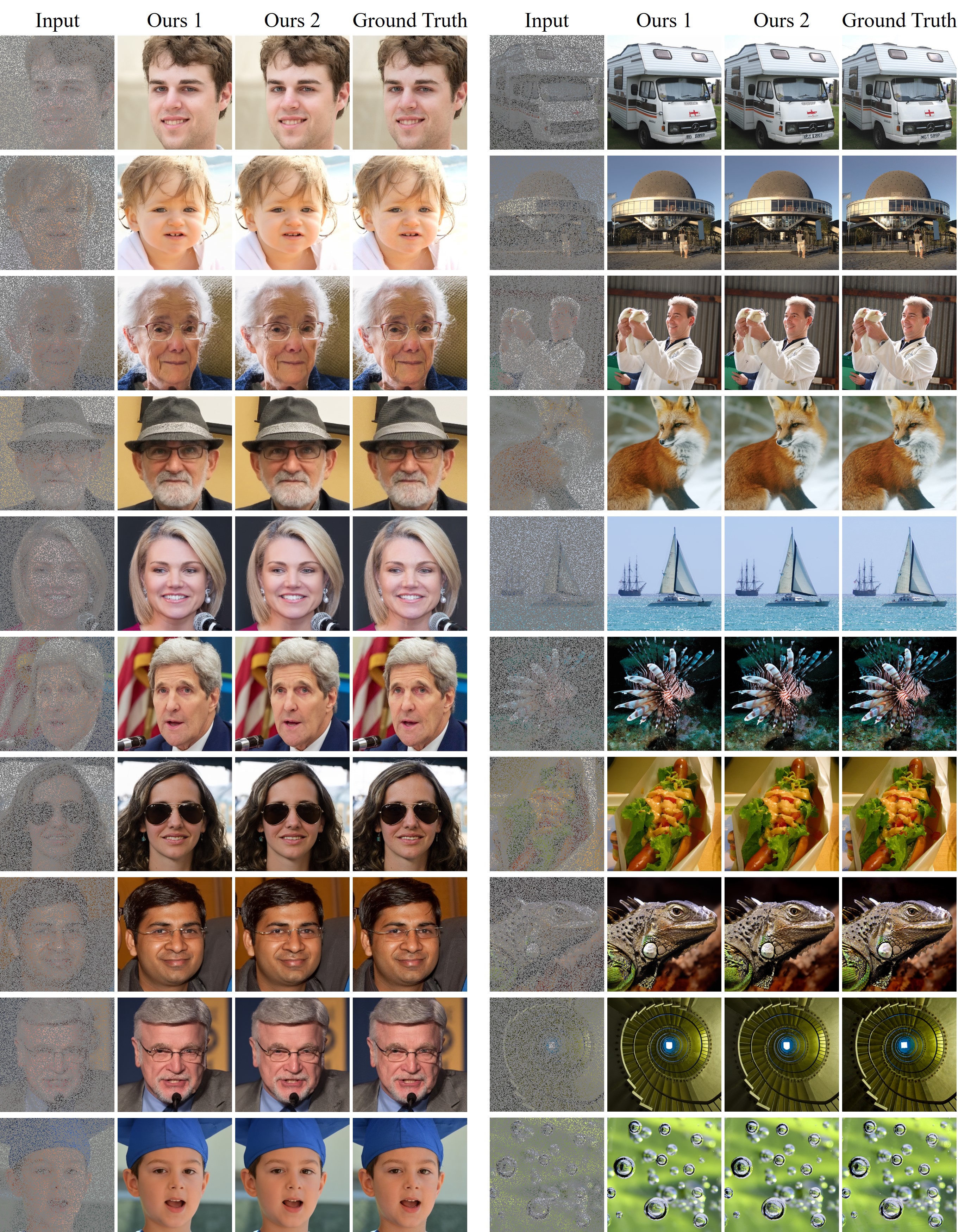}
    \caption{Qualitative inpainting results (Left FFHQ, Right ImageNet) with $\sigma_y=0.01$.}\label{fig:more_more_1} \Description{}
\end{figure*}

\begin{figure*}[ht]
    \centering
    \includegraphics[width=0.92\linewidth]{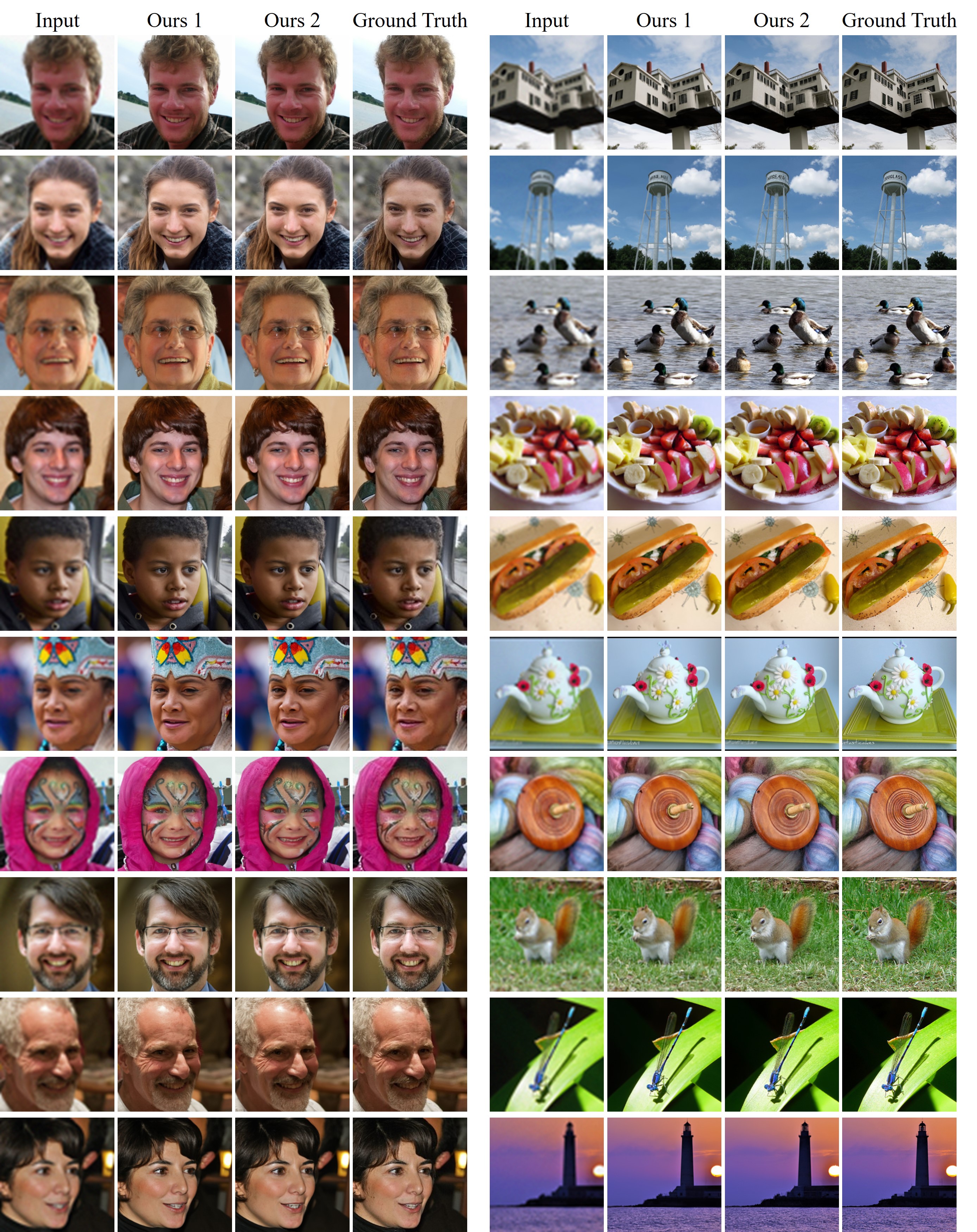}
    \caption{Qualitative SR ($\times$4) results (Left FFHQ, Right ImageNet) with $\sigma_y=0.01$.}\label{fig:more_more_2} \Description{}
\end{figure*}

\begin{figure*}[ht]
    \centering
    \includegraphics[width=0.92\linewidth]{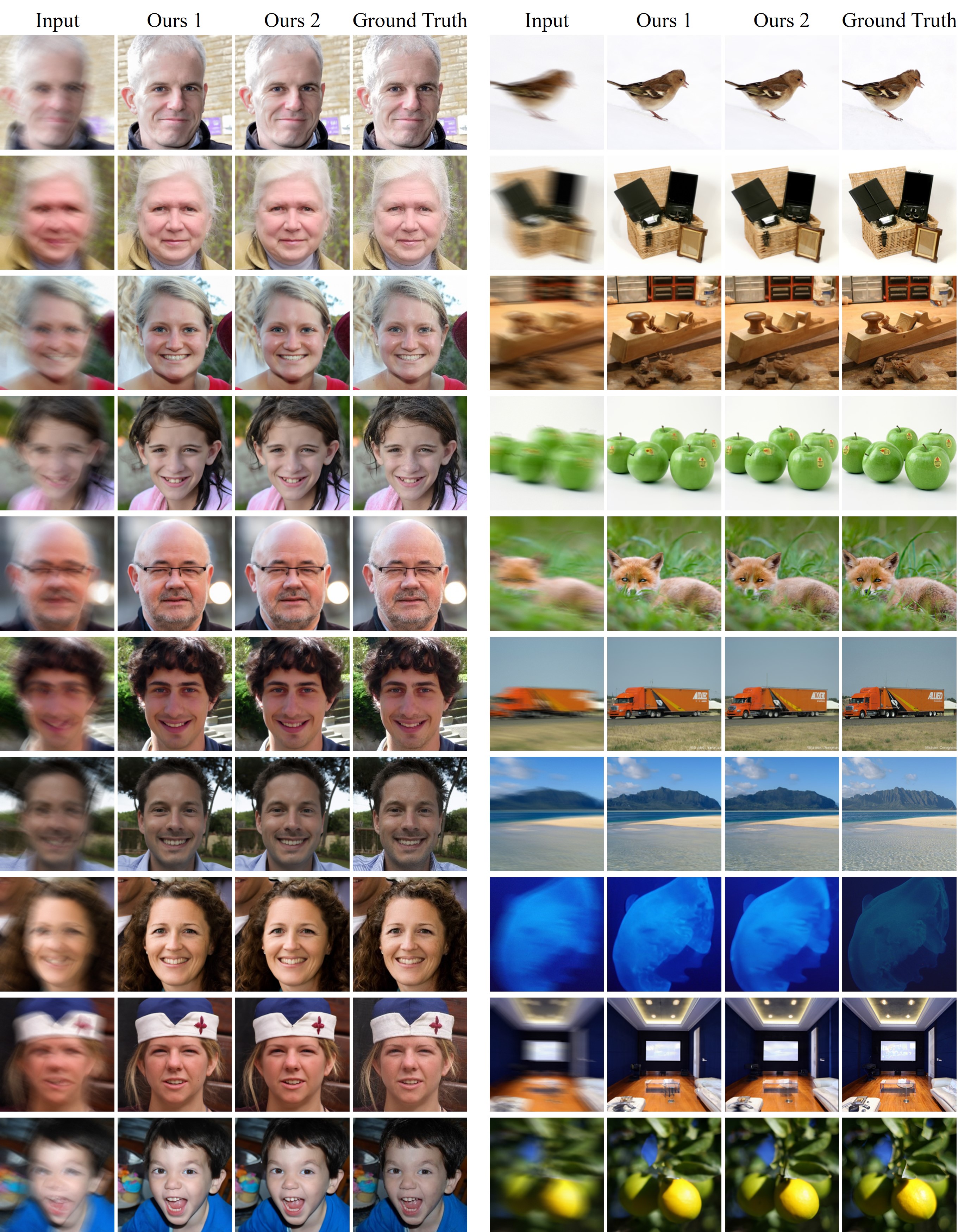}
    \caption{Qualitative motion deblurring results (Left FFHQ, Right ImageNet) with $\sigma_y=0.01$.}\label{fig:more_more_3} \Description{}
\end{figure*}

\begin{figure*}[ht]
    \centering
    \includegraphics[width=0.92\linewidth]{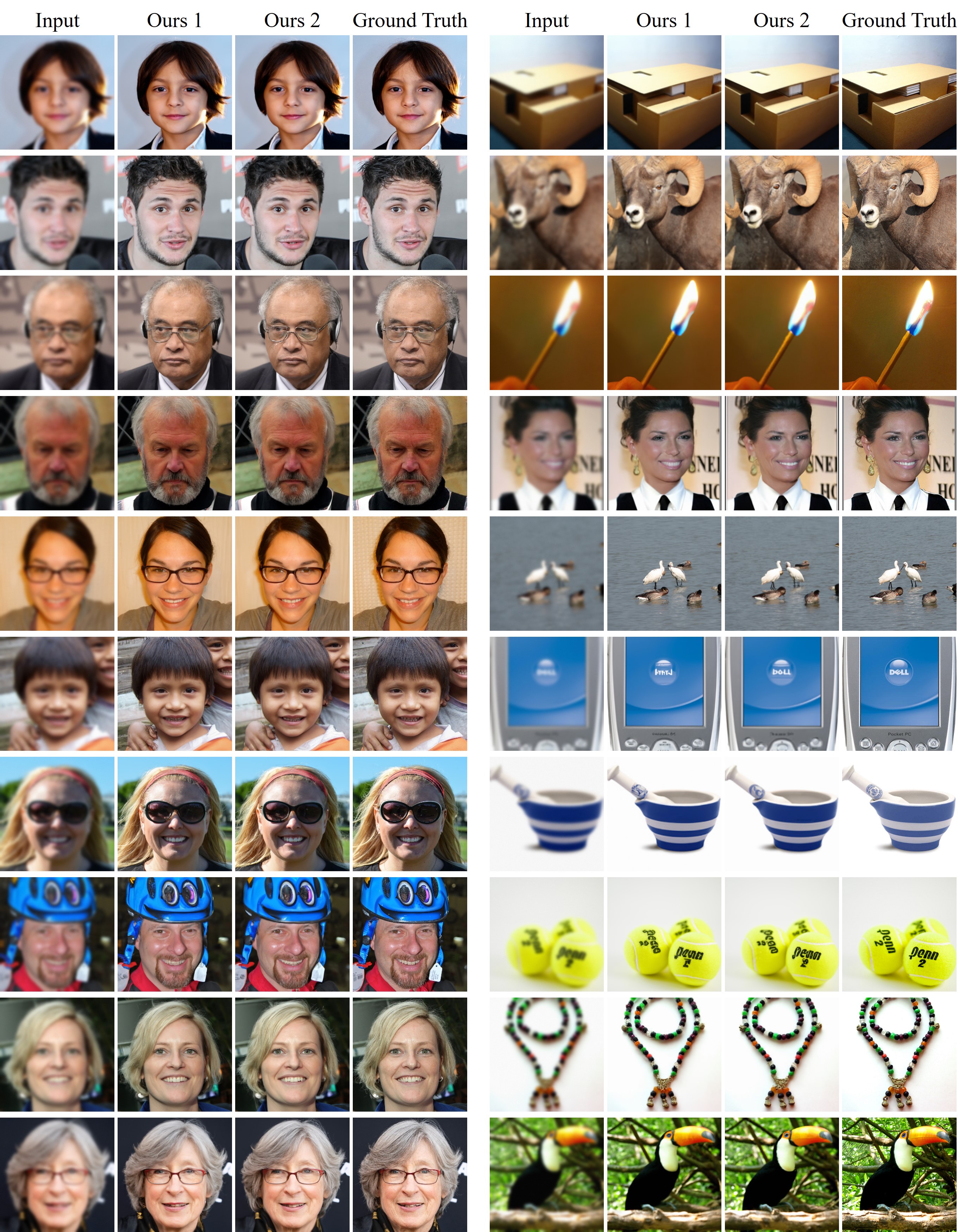}
    \caption{Qualitative Gaussian deblurring results (Left FFHQ, Right ImageNet) with $\sigma_y=0.01$.}\label{fig:more_more_4} \Description{}
\end{figure*}


\end{document}